\newtheorem{theorem}{Theorem}
\newtheorem{proposition}[theorem]{Proposition}
\setlist{nosep,topsep=-\parskip}
\newcommand{\rev}[2]{#2} 
\journal{SPM}
\begin{document}

\captionsetup[figure]{labelfont={},name={Figure}}

\begin{frontmatter}

\title{Translatiing TPMS models to STEP files}
\author[cad]{Yaonaiming Zhao}
\author[cad]{Qiang Zou\corref{cor}}\ead{qiangzou@cad.zju.edu.cn}

\cortext[cor]{Corresponding author.}
\address[cad]{State Key Laboratory of CAD$\&$CG, Zhejiang University, Hangzhou, 310027, China}
\address[Aviation]{Aviation Industry Shenyang Aircraft Design and Research Institute, Liaoning, 110035, China}

\begin{abstract}
Triply periodic minimal surface (TPMS) is emerging as an important way of designing microstructures. However, there has been limited use of commercial CAD/CAM/CAE software packages for TPMS design and manufacturing. This is mainly because TPMS is consistently described in the functional representation (F-rep) format, while modern CAD/CAM/CAE tools are built upon the boundary representation (B-rep) format. One possible solution to this gap is translating TPMS to STEP, which is the standard data exchange format of CAD/CAM/CAE. Following this direction, this paper proposes a new translation method with error-controlling and $C^2$ continuity-preserving features. It is based on an approximation error-driven TPMS sampling algorithm and a constrained-PIA algorithm. The sampling algorithm controls the deviation between the original and translated models. With it, an error bound of $2\epsilon$ on the deviation can be ensured if two conditions called $\epsilon$-density and $\epsilon$-approximation are satisfied. The constrained-PIA algorithm enforces $C^2$ continuity constraints during TPMS approximation, and meanwhile attaining high efficiency. A theoretical convergence proof of this algorithm is also given. The effectiveness of the translation method has been demonstrated by a series of examples and comparisons.
\end{abstract}

\begin{keyword} 
CAD/CAM/CAE integration; TPMS models; STEP files; Error control; Constrained NURBS approximation; $C^2$ continuity
\end{keyword}

\end{frontmatter}


\section{Introduction}
\label{sec:intro}
The triply periodic minimal surface (TPMS) is a porous cellular-like surface with zero mean curvature~\cite{2019_fang_tpms_definition}. TPMS structures have advantages such as high porosity, high surface-to-volume ratios, and super mechanical properties~\cite{2022_wang_tpms_property_porous}, etc. With these features, as well as recent advancements in additive manufacturing technologies~\cite{2021_ding_tpms_slicing}, come increasing applications of TPMS structures in fields like aerospace~\cite{2023_careri_tpms_application_aerospace}, biomedical~\cite{2014_yang_tissue_engineering}, and energy~\cite{2019_catchpole_tpms_application_energy_thermal_conductivity}.

The design and manufacturing of TPMS structures require tight integration of computer-aided design, manufacturing, and engineering (CAD/CAM/CAE) tools~\cite{liu2021memory}. However, TPMS is usually described in the functional representation (F-rep) format~\cite{2023_hong_tpms_functional_representation}, while modern CAD/CAM/CAE software packages are built upon the boundary representation (B-rep) format~\cite{zou2023variational}. There are two possible solutions to this problem: (1) F-rep-based integration using specialized CAD/CAM/CAE software~\cite{2022_wang_tpms_property_porous,2022_distance-field-rep_TPMS,2023_neural-implicit-rep_TPMS,2023_hong_tpms_functional_representation}; (2) STEP-based integration using existing commercial CAD/CAM/CAE software~\cite{2011_TPMS-to-mesh_visualization,2021_TPMS-to-mesh_analysis,2021_Flores_TPMS_NURBS_generation_Gyroid}. (STEP stands for the Standard for The Exchange of Product data, and it is a B-rep-based neutral file format supporting the solid model exchanges among CAD/CAM/CAE tools~\cite{zou2019push,zou2022robust,wang2023quasi,zou2014iso}.)

Developing brand-new, specialized CAD/CAM/CAE software that can run stably and efficiently in real industrial practice requires huge investment and a long time of fine-tuning and user training~\cite{2016_xiao_tspline_data_exchange}. This work thus opts for the second strategy, which allows TPMS structures to be directly designed and made in existing CAD/CAM/CAE software. Then, the conventional design and manufacturing tools like geometric modeling~\cite{zou2020decision,zou2019variational,su2020accurate}, design optimization~\cite{li2023xvoxel,martins2021engineering}, and tool path generation~\cite{zou2013iso,zou2021length,zou2021robust,luo2023simple,wang2023computing} become applicable to TPMS. Nevertheless, this cannot happen without a tool that is able to translate TPMS models to STEP files. Existing methods related to this topic may be classified into two major categories. The first one is to approximate TPMS with surface meshes~\cite{2021_TPMS-to-mesh_analysis,2011_TPMS-to-mesh_visualization}. 
The second category converts TPMS to Non-Uniform Rational B-Splines (NURBS) surfaces~\cite{2021_Flores_TPMS_NURBS_generation_Gyroid}. Compared to meshes, NURBS surfaces offer better accuracy and compactness in representing shapes~\cite{2022_wu_NURBS_advantages}. However, existing TPMS-to-NURBS methods are consistently limited to simple cases, without considering error control over the approximation process and continuity preservation in the translated model~\cite{2021_Flores_TPMS_NURBS_generation_Gyroid}. Without controlled errors and smoothness, the translated model cannot be used by downstream applications with high fidelity.

This paper presents our attempts to attain error control and $C^2$ continuity in TPMS-to-STEP translation for the most commonly used Gyroid, Diamond, and Schwarz\_P TPMS structures. The proposed method can guarantee that the deviation between the original structure and the translated solid model will not exceed a given error threshold. That is, an error bound of $2\epsilon$ on the deviation can be ensured if two conditions called $\epsilon$-density and $\epsilon$-approximation are satisfied. Besides error controlling, the proposed method can preserve the $C^2$ continuity of translated models. $C^2$ continuity constraints are first derived and then incorporated into the traditional progressive iterative approximation (PIA) algorithm~\cite{2018_lin_survey_pia} to attain an extended version of PIA, called constrained-PIA or simply CPIA. CPIA not only provides the $C^2$ continuity feature but also has high computational efficiency. This is particularly useful for cases requiring high translation accuracy, which implies a large number of sampling points to be approximated. We also prove that CPIA has a guaranteed convergence. Note that CPIA is general enough to handle a higher order of continuity, but this work focuses primarily on $C^2$ continuity.

The following sections begin with a review of related work in Sec.~\ref{sec:related_work}, then the proposed method's details in Sec.~\ref{sec:methods}, then the validation of the method using a series of examples and comparisons in Sec.~\ref{sec:results}, and finally conclusions on the method’s advantages and limitations in Sec.~\ref{sec:conclusion}.

\section{Related Work}
\label{sec:related_work}
This section briefly discusses the literature related to TPMS modeling (Sec.~\ref{sec:TPMS_modeling}) and NURBS approximation (Sec.~\ref{sec:NURBS_approximation}). The TPMS modeling methods are classified into two categories: implicit modeling and explicit modeling. The process of NURBS approximation involves four steps: parameterization, knot placement, weights assignment, and control points computation. Existing methods for each step are reviewed.

\subsection{TPMS modeling}
\label{sec:TPMS_modeling}
\textbf{Implicit methods.} Implicit models use mathematical functions to represent the shape of TPMS structures. Various mathematical functions have been used for this purpose, including trigonometric equation-based method\rev{~\cite{2019_fang_tpms_definition}}{~\cite{2019_fang_tpms_definition, 2021_hu_tpms_modeling_implicit}}, signed distance field-based method~\cite{2022_distance-field-rep_TPMS}, and boundary curves-based method\rev{~\cite{2023_neural-implicit-rep_TPMS}}{~\cite{2023_neural-implicit-rep_TPMS, 2023_liane_tpms_modeling_boundary_curve_based}}, among others. Despite wide applications, these methods have some serious limitations. For example, because of the limited expressiveness of trigonometric equations, the trigonometric equation-based method can only express a small part of TPMS structures. The signed distance field-based method is general, but when it is converted to triangular meshes, sharp features are often lost and low-quality meshes are generated. What's worse, implicit TPMS models may still need to be explicitized  (i.e., to meshes) to make downstream applications like interaction, analysis, and fabrication easier~\cite{2021_TPMS-to-mesh_analysis,2011_TPMS-to-mesh_visualization}. Mesh-free interaction, analysis, and fabrication of TPMS models are possible, and considerable progress has been made, e.g.,~\cite{2021_ding_tpms_slicing,2023_jiang_meshless_tpms_based_analysis}. This direction of research is promising, but they are essentially developing another collection of CAD/CAM/CAE tools that completely leave B-rep techniques out, which is never easy and requires a huge investment and a long time of fine-tuning and user training~\cite{2016_xiao_tspline_data_exchange}.

\textbf{Explicit methods.} In contrast to implicit methods, explicit methods can make full use of existing CAD/CAM/CAE software packages since they are B-rep-based. Over the past few decades, various B-rep schemes have been employed to model TPMS, including mesh surface-based method~\cite{2021_TPMS-to-mesh_analysis,2019_feng_tpms_mesh_visualization,2021_asbai_tpms_to_mesh,2022_kestutis_TPMS_mesh,2024_Na_TPMS_meshing}, subdivision surface-based method~\cite{2019_savio_subdivision_surface_TPMS,2021_mesh_subdivision_tpms,2011_Pan_minimal-subdivision-surface}, and NURBS surface-based method~\cite{2021_Flores_TPMS_NURBS_generation_Gyroid}. Among them, the NURBS-based methods offer more potential for interfacing with existing CAD/CAM/CAE tools due to the fundamental role of NURBS curves/surfaces in CAD/CAM/CAE~\cite{2015_aubry_nurbs_application,2020_noruzi_nurbs_application}. In particular, the method presented by Flores et al.~\cite{2021_Flores_TPMS_NURBS_generation_Gyroid} decomposes the problem of approximating TPMS models with NURBS surfaces into several sub-problems of approximating a portion of TPMS models and then assembling them to attain the overall approximation. This way of working is easier than the original problem. Unfortunately, only simple cases were considered, and how to control approximation error and preserve continuity in assembling individual approximations is not quite known. The proposed method also employs this decomposition-then-recombination framework but with new error-controlling and $C^2$ continuity-preserving features.

\subsection{NURBS approximation}
\label{sec:NURBS_approximation}
The underlying technique of TPMS-to-STEP translation is NURBS surface approximation, which is in turn related to parameterization, knot placement, weight assignment, and computation of control points. For this reason, this work also gives a brief review of each of these procedures, as follows. A thorough review of them can be found in existing literature, e.g.,~\cite{1995_NURBS_BOOK}.

\textbf{Parameterization.}
Parametrization means associating each data point with a parameter value. Heuristic methods are the dominant means of parameterization, e.g., uniform~\cite{1995_NURBS_BOOK}, chord length~\cite{2013_fang_chord_length}, centripetal~\cite{1989_lee_centripetal_parameterization}, universal~\cite{1999_lim_universal_parameterization}, and Foley-Nielson~\cite{1989_foley_knot}, to name a few. Such heuristic methods have also been used as an initial guess for further parameterization, e.g.,~\cite{2016_iglesias_parameterization_optimize}.\rev{However, this leads to a highly non-linear constrained optimization problem. As a result, the above heuristic methods are usually the parameterization method of choice in practice.}{}

\textbf{Knot placement.}
Knot placement refers to the choice of the number of knots and their locations for constructing B-spline bases. Like parameterization, heuristic methods like the averaging technique (AVG) and the knot placement technique (KTP) are often the method of choice in practice~\cite{1995_NURBS_BOOK,piegl2000surface}. These methods have also been used as an initial guess and then optimized through iterative knot insertion~\cite{2019_luo_knot_calulation, 2021_michel_heuristic_knot_placement}. \rev{The essential task here is to determine when and where to insert the new knots. Several effective heuristics have been developed, such as angle variations[39], and minimum knot size[40].}{}

\textbf{Weight assignment.} Weights are used to attain more expressibility in representing shapes with NURBS. Weights are usually added in an ad-hoc manner, e.g., the uniform method~\cite{1995_NURBS_BOOK} and the curvature-based method~\cite{1992_Farin_Curvature_based_weight_assignment}. \rev{In the uniform method, each control point is assigned with the same weight, which is simple to implement but loses the benefits from using weights. The curvature-based method assigns weights to control points based on the local curvature. This method allows precise control to capture geometry details, but it is time-consuming.}{}

\rev{}{The parameterization, knot placement, and weight assignment methods mentioned above provide a variety of options in NURBS approximation. In this paper, to simplify the NURBS approximation problem in TPMS2STEP and make it easier to implement, uniform parameterization, uniform knot placement, and uniform weight assignment methods are chosen.}

\textbf{Control points computation.} \rev{Since control points directly determine the shape of surfaces, their calculation is crucial in NURBS approximation.}{} The control points computation is typically expressed in terms of a least square optimization problem, which boils down to solving a linear system~\cite{2021_Flores_TPMS_NURBS_generation_Gyroid}. This method is time-consuming and becomes infeasible when the number of the data points to be approximated is very large~\cite{2018_lin_survey_pia}.
To address this issue, iterative methods have been used instead, such as those presented in~\cite{2008_Nils_gauss_newton_based_iterative_solving_linear_systems,2005_Lin_pia_for_NURBS}. Among them, the progressive iterative approximation (PIA) method~\cite{2018_lin_survey_pia} may be the easiest to implement and provides more flexibility to incorporate geometric constraints. Thus, it will be used as the backbone algorithm for developing the error-controlled and $C^2$ continuity-preserving TPMS approximation algorithm in this paper.

As can be seen from the above review, NURBS approximation is a well-established field, as long as the data points are ready and of good quality. In this regard, this work does not intend to add new results to NURBS approximation but to adapt the existing methods stated above to the special needs of TPMS-to-STEP translation. That is, we focus primarily on two problems: (1) how to prepare quality data points for NURBS approximation so that the final results have controlled errors; (2) how to ensure that the continuity of original TPMS models is not broken in the approximation.

\section{Methods}
\label{sec:methods}
\subsection{TPMS and NURBS preliminaries}
\label{sec:preliminaries}
To better present TPMS2STEP, some basic knowledge of TPMS and NURBS, as well as their notations, is first described. TPMS refers to a surface with zero mean curvature arranged periodically in 3D space. To design a TPMS solid model with non-zero thickness (Fig.~\ref{tpms_thickness}), the equations for offset primitive surfaces can be derived from the Enneper–Weierstrass representation~\cite{1995_Gandy_TPMS_explicit_representation_Weierstrass_Diamond,2000_Gandy_TPMS_explicit_representation_Weierstrass_Gyroid,2000_Gandy_TPMS_explicit_representation_Weierstrass_SchwarzP} :
\begin{align}\label{offsetsurface}
\begin{aligned}
    & x=e^{i\theta }Re \left[ \int_{\omega _0}^{\omega }(1-\tau^2) R(\tau)d\tau \right]+\frac{n_xd}{\sqrt[]{n_x^2+n_y^2+n_z^2} } \\
    & y=e^{i\theta }Re\left[ \int_{\omega _0}^{\omega }i(1+\tau^2) R(\tau)d\tau \right]+\frac{n_yd}{\sqrt[]{n_x^2+n_y^2+n_z^2} }\\
    & z=e^{i\theta }Re\left[ \int_{\omega _0}^{\omega }2\tau R(\tau)d\tau \right]+\frac{n_zd}{\sqrt[]{n_x^2+n_y^2+n_z^2} }
\end{aligned}
\end{align}
where $i$ is an imaginary unit. $ Re$ extracts the real part of a complex number. $ d$ is the offset value. $ \theta$ is the Bonnet angle. $R(\tau)$ is the Weierstrass function with a complex variant $\tau$. $ \omega_0$ and $ \omega$ are the two end points of the integral path in Enneper–Weierstrass representation. The normal vectors $n=(n_x,n_y,n_z)$ are defined as\rev{}{~\cite{2004_mit_course_chapter_18}}:
\begin{align*}
\begin{aligned}
    & n_x=2Im(\phi_2\overline{\phi_3}) \\
    & n_y=2Im(\phi_3\overline{\phi_1}) \\
    & n_z=2Im(\phi_1\overline{\phi_2})
\end{aligned}
\end{align*}
where $\phi_1$, $\phi_2$, and $\phi_3$ are the first-order derivatives of Enneper–Weierstrass representation. $ \overline{\phi_1}$, $ \overline{\phi_2}$, and $ \overline{\phi_3}$ are the conjugation of $\phi_1$, $\phi_2$, and $\phi_3$ respectively. \rev{}{$Im$ refers to the operator taking the imaginary part of a complex number.}

\begin{figure}[t]
  \centering
  \includegraphics[width=0.45\textwidth]{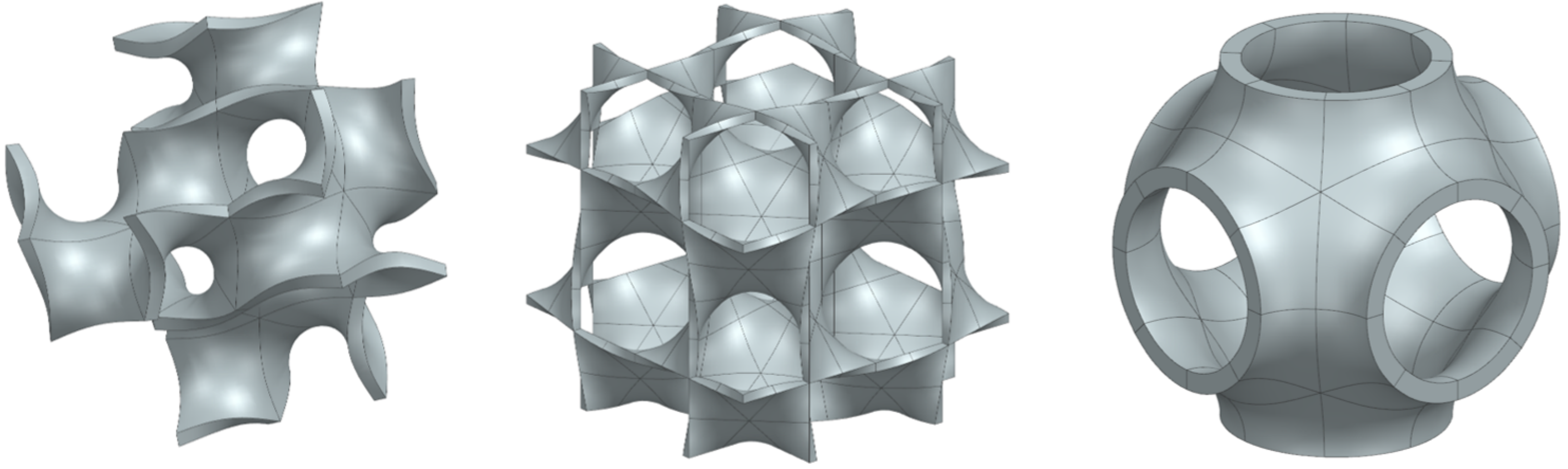}
  \caption{Illustration of TPMS solid models for Gyroid (left), Diamond (middle), and Schwarz\_P (right).}
  \label{tpms_thickness}
\end{figure}

A NURBS surface is described by:
\begin{equation}
    \mathbf{S}(u, v) = \frac{\sum_{i=0}^{n}\sum_{j=0}^{m} \mathbf{B}_{i,p}(u)\mathbf{B}_{j,q}(v)w_{i,j}\mathbf{P}_{i,j}}{\sum_{i=0}^{n}\sum_{j=0}^{m} \mathbf{B}_{i,p}(u)\mathbf{B}_{j,q}(v)w_{i,j}}
\end{equation}
where $ u$ and $ v$ are parameters in $ [0,1]$. $\mathbf{B}_{i,p}(u), i=0,\cdots,n,$ and $\mathbf{B}_{j,q}(v), j=0,\cdots,m,$ are the B-spline basis functions of degree p and q. $ \mathbf{P}_{i,j}, i=0,\cdots,n, j=0,\cdots,m$ are control points. $w_{i,j}$ means the weight of the corresponding control point $ \mathbf{P}_{i,j}$. The basis functions $\mathbf{B}_{i,p}(u)$ and $\mathbf{B}_{j,q}(v)$ are generated using a knot vector $U=\{0=u_0<u_1<\cdots<u_n=1\}$ in $u$ direction and a knot vector $V=\{0=v_0<v_1<\cdots<v_m=1\}$ in $v$ direction, as follows:
\begin{align*}
    & \mathbf{B}_{i,0}(u)=\left\{\begin{array}{rcl} 
    1 & if \;u_i\le u\le u_{i+1}\\
    0 & otherwise
    \end{array}\right . \\
    & \mathbf{B}_{i,p}(u)=\frac{u-u_i}{u_{i+p}-u_i} \mathbf{B}_{i,p-1}(u)+\frac{u_{i+p+1}-u}{u_{i+p+1}-u_{i+1}} \mathbf{B}_{i+1,p-1}(u)
\end{align*}

\subsection{The TPMS2STEP pipeline}
\label{sec:pipeline}
The proposed TPMS-to-STEP translation pipeline is shown in Fig.~\ref{tpms_pipeline}. It begins with four input parameters: the type of TPMS, the offset value for turning a TPMS surface into a solid, the approximation tolerance, and possibly the scaling factor varying with location. This input can be viewed as a parametric representation of the intended TPMS structure. This parametric representation is converted to NURBS surfaces with regard to the approximation tolerance specified by the user and the $C^2$ continuous constraints intrinsic to TPMS itself. This is where the proposed CPIA comes into play. Because TPMS is highly periodic, there is no need for approximating all surfaces bounding the TPMS structure but only several representative surfaces (to be called primitive surfaces in what follows). Having primitive surfaces in place, we create several copies of them and reposition them to the proper locations, followed by a sewing step to form a watertight, $C^2$ continuous B-rep model. The last step is to apply the varying scaling factor, if any, to this B-rep model and then write the scaled model to a .step file according to the STEP format.

\begin{figure*}[th!]
  \centering
  \includegraphics[width=1.0\textwidth]{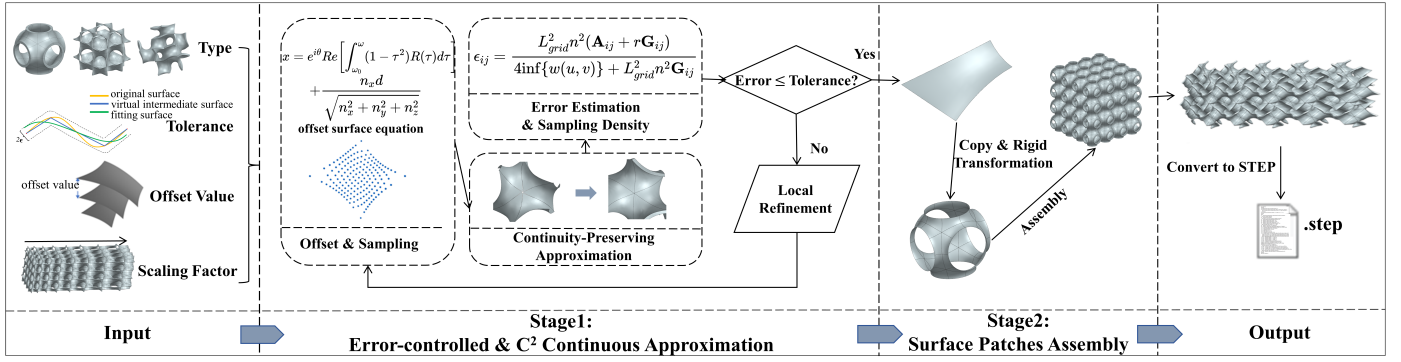}
  \caption{The proposed TPMS-to-STEP translation pipeline.}
  \label{tpms_pipeline}
\end{figure*}

In the above process of constructing NURBS surfaces, the error between the original surface and the approximation surface needs to be controlled so that the resulting STEP file can be used by downstream applications with high fidelity. \rev{An error-controlled sampling method is proposed for this purpose. There are two types of errors involved: one from sampling density, and the other from the approximation algorithm used. These two errors are actually coupled, and directly controlling the overall error from the final approximation surface to the original TPMS model is hard. To solve this challenge, we first create a virtual intermediate surface, which allows partial error control. That is, if the error from the original TPMS model to this intermediate surface is $\epsilon$, and the other error from the final approximation surface to this intermediate surface is $\epsilon$, then the overall error must not exceed $2\epsilon$, as shown in Fig.~\ref{error_source}. Then this error bound $2\epsilon$ can be forced to equal the user-specified tolerance. As such, the two errors are decoupled, and the error-controlling task is made easier. For each partial error control, an error estimation model is also developed so that the sampling can be not only driven by the user-specified tolerance but also done adaptively.}{Because it is hard, if not impossible, to simultaneously carry out TPMS sampling and NURBS fitting with respect to an error limit, a virtual intermediate surface is created to divide the overall error into two decoupled errors: one from the TPMS samples to the virtual intermediate surface, and the other from the virtual intermediate surface to the approximate NURBS surface. Such a decoupling of errors allows partial error control. That is, if the error from the original TPMS model to this intermediate surface is $\epsilon$, and the other error from the final approximation surface to this intermediate surface is $\epsilon$, then the overall error must not exceed $2\epsilon$, as shown in Fig.~\ref{error_source}. Then this error bound $2\epsilon$ can be forced to equal the user-specified tolerance, thereby simplifying the error-controlling task. For each partial error control, an error estimation model is to be developed so that the sampling can be not only driven by the user-specified tolerance but also done adaptively.}

\begin{figure}[t]
  \centering
  \includegraphics[width=0.45\textwidth]{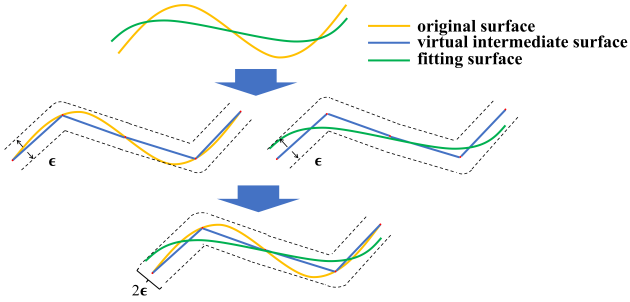}
  \caption{2D illustration of the virtual intermediate surface method. The original surface and approximation surfaces are constrained within the error band (the dashed lines), and therefore a $2\epsilon$ error bound on their deviation.}
  \label{error_source}
\end{figure}

The other major problem in NURBS approximation is $C^2$ continuity. As already noted, the PIA algorithm~\cite{2018_lin_survey_pia} is to be used to carry out the approximation. However, a direct application of PIA does not help here because there are complex geometric constraints involved and they can change the original PIA iteration scheme into a new form. A modified version of PIA, i.e., CPIA, is proposed for this reason. In Sec.~\ref{sec:nurbsfit}, we first derive those geometric constraints regarding $C^2$ continuity and then incorporate them into the PIA algorithm to attain  $C^2$ continuous TPMS approximation. A theoretical convergence proof for CPIA's new iteration scheme is given in Sec.~\ref{sec:proof}.

\subsection{Error-controlled TPMS sampling}
\label{sec:errorcontrol}
As already noted, using a virtual intermediate surface, an error bound of $2\epsilon$ can be attained on the deviation between the original surface and the approximation surface. The essential tasks are then constructing the virtual intermediate surface and controlling the errors from the original surface to the virtual intermediate surface (abbr. as O-to-V) and from the approximation surface to the virtual intermediate surface (abbr. as A-to-V) so that they both are under $\epsilon$. In this work, triangular meshes are chosen to construct virtual intermediate surfaces. Actually, constructing the virtual intermediate surface and controlling O-to-V error can be done in a combined way. This is because the O-to-V error is equivalent to the V-to-O error. We only need to sample the original surface in such a density that the mesh from these sample points has a distance of $\epsilon$ to the original surface. We call this $\epsilon$-density. Similarly, the approximation that gives an A-to-V error of $\epsilon$ is to be called $\epsilon$-approximation. 

\subsubsection{$\epsilon$-density}
As can be seen from the zoomed-in illustration of Fig.~\ref{fig:flip-algorithm}, $\epsilon$-density is to determine the $\Delta$ such that the local shape of the orginal surface $S_O(u,v)$ is bounded by the polyhedron generated through offsetting the triangles $(v_1, v_2, v_3)$ and $(v_1, v_3, v_4)$ by $\epsilon$, where $v_1 = S_O(u_0,v_0)$, $v_2 = S_O(u_0+\Delta,v_0)$, $v_3 = S_O(u_0,v_0+\Delta)$, and $v_4 = S_O(u_0+\Delta,v_0+\Delta)$.

The density parameter $\Delta$ and the offset value $\epsilon$ are related by the following equation (the Fillip's algorithm~\cite{1987_Filip_error_estimation}):
\begin{equation}\label{eq:sampledensity}
    \epsilon = \frac{(M_1+2M_2+M_3)}{8} \Delta^2
\end{equation}
where $M_1$, $M_2$, and $M_3$ are the maximum \rev{second order}{second-order} partial derivatives of $S_O(u,v)$, as follows:
\begin{align*}
    & M_1=\max_{(u,v)\in D}(\left |\frac{\partial ^2x}{\partial u^2}  \right |, \left |\frac{\partial ^2y}{\partial u^2}  \right |, \left |\frac{\partial ^2z}{\partial u^2}  \right |) \\
    & M_2=\max_{(u,v)\in D}(\left |\frac{\partial ^2x}{\partial u\partial v}  \right |, \left |\frac{\partial ^2y}{\partial u\partial v}  \right |, \left |\frac{\partial ^2z}{\partial u\partial v}  \right |) \\
    & M_3=\max_{(u,v)\in D}(\left |\frac{\partial ^2x}{\partial v^2}  \right |, \left |\frac{\partial ^2y}{\partial v^2}  \right |, \left |\frac{\partial ^2z}{\partial v^2}  \right |)
\end{align*}
$D$ is the parameter domain of interest, \rev{}{and the method to estimate the maximum second-order partial derivatives is given in Section 3 of the supplementary material}.

Based on Eq.~\eqref{eq:sampledensity}, the $\epsilon$-density algorithm consists of three major steps:
\begin{enumerate}
    \item Substitute the given error threshold $\epsilon$ to get the sampling density parameter $\Delta$;
    \item Discretize the parameter domain of $S_O(u,v)$ into a grid with size $\Delta$; and
    \item Map the parametric grid to 3D space through $S_O(u,v)$ and then mesh it into the intended virtual intermediate surface.
\end{enumerate}

\begin{figure}[h]
  \centering
  \includegraphics[width=0.48\textwidth]{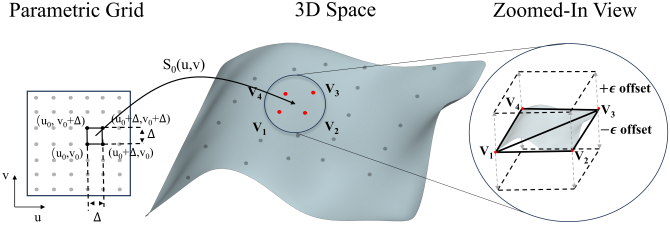}
  \caption{2D illustration of $\epsilon$-density.}
  \label{fig:flip-algorithm}
\end{figure}

\subsubsection{$\epsilon$-approximation}
Because it is hard to estimate the error from the approximation surface to the virtual intermediate surface before approximation, the error is controlled with an indirect approach. That is, the error is estimated after approximation and is utilized to refine the sampling. This refinement increases the sampling density if the error is beyond $ \epsilon$. For error estimation, the algorithm proposed by Zheng et al.~\cite{2000_Zheng_error_estimation_nurbs} is adopted due to its tightness. 
The error estimation equation is:
\begin{small}
\begin{equation}
\begin{aligned}\label{approximation_error}
  \epsilon_{ij} &= \frac{L_{grid}^2n^2 (\mathbf{A}_{ij}+ r\mathbf{G}_{ij})}{4\mathrm{inf}\{ w(u,v) \} +L_{grid}^2n^2\mathbf{G}_{ij}} , 0\le i\le n-1, 0\le j\le m-1\\
 \mathbf{A}_{ij} &=\parallel w_{i,j}\mathbf{P}_{i,j}-w_{i+1,j}\mathbf{P}_{i+1,j}-w_{i,j+1}\mathbf{P}_{i,j+1}+w_{i+1,j+1}\mathbf{P}_{i+1,j+1}\parallel  \\
 \mathbf{G_{ij}} &=\mid w_{i+1,j+1}-w_{i+1,j}-w_{i,j+1}+w_{i,j} \mid
\end{aligned}
\end{equation}
\end{small}
where $ \epsilon_{ij}$ is the estimated tolerance in the $(i,j)$th grid of control points, $L_{grid}$ the distance between each two grid points, $ m$ and $ n$ the number of control points in $ u$ and $ v$ directions plus one, and $r = \max_{0\le i\le n \atop 0\le j\le m}\parallel \mathbf{P}_{ij} \parallel$.

After error estimation, the adaptive refinement of sampling density is carried out by adding a row and a column of sample points locally in the grids where the local error is larger than $\epsilon$. The above refinement is repeated until the approximation error is below $\epsilon$. 

\subsection{$C^2$ continuity-preserving TPMS approximation}
\label{sec:nurbsfit}
Approximating data points ${\mathbf{Q}_{ij}} (i=0,\dots,m_1, j=0,\dots,m_2)$ with a NURBS surface is to solve the following optimization problem~\cite{2021_Flores_TPMS_NURBS_generation_Gyroid}:
\begin{equation}
\min_{{\mathbf{P}_{00},\cdots,\mathbf{P}_{n_1n_2}}} \sum_{i=0}^{m1}\sum_{j=0}^{m2} || \mathbf{Q}_{ij} - \sum_{k=0}^{n1}\sum_{l=0}^{n2} \mathbf{B}(u_i)\mathbf{B}(v_j)w_{kl}\mathbf{P}_{kl} ||^2_2
\end{equation}
where $u_i$ and $v_j$ are the parameter values corresponding to the data point $ \mathbf{Q}_{ij} $. \rev{}{The parameter values are generated using the following procedures. First, the fundamental patch is sampled into point clouds, both in the 2D parameter domain and the 3D space. Then the parameter domain of the fundamental patch is mapped to a triangular region. This region is extended to a rectangular region by the symmetry property, and the same for 2D parametric samples, which results in a quasi-grid of parametric samples. Finally, those parametric samples are associated with the regular grid points in [0,1]×[0,1], correspondingly.}

\rev{This}{The above} optimization problem is a least square problem, and its solution can be obtained by solving the following linear system:
\begin{equation}\label{eq:approximation-linear-equation}
    (\mathbf{B}^T\mathbf{B})\mathbf{P}=\mathbf{B}^T\mathbf{Q}
\end{equation}
where $ \mathbf{P}=[\mathbf{P}_{00},\cdots,\mathbf{P}_{n_1n_2}]^T$, $ \mathbf{Q}=[\mathbf{Q}_{00},\cdots,\mathbf{Q}_{m_1m_2}]^T$, and
\begin{align*}
    \mathbf{B}=
\begin{bmatrix}
\mathbf{B}_{0,p}(u_0)\mathbf{B}_{0,p}(v_0) & \dots & \mathbf{B}_{n_1,p}(u_0)\mathbf{B}_{n_2,p}(v_0)\\
\vdots & \ddots & \vdots\\
\mathbf{B}_{0,p}(u_{m_1})\mathbf{B}_{0,p}(v_{m_2}) & \dots & \mathbf{B}_{n_1,p}(u_{m_1})\mathbf{B}_{n_2,p}(v_{m_2})
\end{bmatrix}
\end{align*}

If a high TPMS translation accuracy is required (i.e., large-scale sampling points to be approximated), the matrix $\mathbf{B}$ in Eq.~\eqref{eq:approximation-linear-equation} is large, and solving this equation is time-consuming. For this reason, the fast PIA algorithm is used instead of solving the equation directly. Its basic steps are summarized below.

The initial step of PIA is constructing a NURBS surface using the data points $\mathbf{Q}$ directly as control points, as follows:
\begin{equation}\label{eq:pia-original}
    \mathbf{P}^0(u,v)=\sum_{i=0}^{m_1} \sum_{j=0}^{m_2} \mathbf{B}_i(u)\mathbf{B}_j(v)w_{ij}\mathbf{P}^0_{ij},0\le u,v\le1
\end{equation}
where $\mathbf{P}^0_{ij} = \mathbf{Q}_{ij}$. The second step is computing the deviation from each ${\mathbf{Q}}_{ij}$ to its associative point on surface $\mathbf{P}^0(u,v)$, as follows:
\begin{equation}\label{eq:pia-original-adjustment}
    \mathbf{\Delta}^0_{ij}=\mathbf{Q}_{ij}-\mathbf{P}^0(u_i,v_j)
\end{equation}
The third step is using this deviation to update the initial control points in Eq.~\eqref{eq:pia-original}:
\begin{equation}\label{eq:pia-original-iteration}
    \mathbf{P}^1_{ij}=\mathbf{P}^0_{ij}+\mathbf{\Delta}^0_{ij}
\end{equation}
Repeating Eqs.~\eqref{eq:pia-original-adjustment} and~\eqref{eq:pia-original-iteration} yields a sequence of surfaces $\{\mathbf{P}^0(u,v), \mathbf{P}^1(u,v), \cdots, \mathbf{P}^{k}(u,v)\}$. The limit of this surface sequence is equivalent to solving Eq.~\eqref{eq:approximation-linear-equation}, as proved by Lin et al.~\cite{2018_lin_survey_pia}. It is easy to see that this iterative scheme is highly parallel, thereby an efficient method.

A direct application of PIA is not enough for TPMS approximation. $C^2$ continuity constraints for smoothly sewing primitive surfaces need to be derived and then incorporated into the PIA algorithm, leading to a new iteration scheme (i.e., CPIA). In the following, we first derive those constraints and then present CPIA's details, followed by the convergence proof of its iteration scheme.

\subsubsection{$C^2$ continuity constraints}
\rev{To preserve $ C^2$ continuity in approximation, different continuity control methods need to be respectively developed for Gyroid, Diamond, and SchwarzP TPMS because their primitive surfaces are different, as shown in Fig.~\ref{surface_symmetry_g2}. They are presented separately in the following paragraphs. }{To preserve $ C^2$ continuity in approximation, geometric constraints need to be added where surfaces meet. To find the junctions of surfaces, we carefully considered the surface assembly methods used in this paper (i.e., the methods in ~\cite{2021_Flores_TPMS_NURBS_generation_Gyroid, 1995_Gandy_TPMS_explicit_representation_Weierstrass_Diamond, 2000_Gandy_TPMS_explicit_representation_Weierstrass_SchwarzP}). As is shown in Fig.~\ref{surface_symmetry_g2}, the assembled offset surfaces are constructed by the copy and rigid transformation of two primitive surfaces with $ offset=+\delta$ and $ offset=-\delta$. The specific rigid transformation matrices for assembly can be found in Eq. (6), Eq. (16), and Eq. (19) of ~\cite{2021_Flores_TPMS_NURBS_generation_Gyroid} for Gyroid, Table 1 in ~\cite{1995_Gandy_TPMS_explicit_representation_Weierstrass_Diamond} for Diamond, and Table 1 in ~\cite{2000_Gandy_TPMS_explicit_representation_Weierstrass_SchwarzP} for Schwarz\_P. (We omit the details of the matrices here because they are not new results. Also, the specific entries in those matrices are not important, nor do they affect the presentation of this work.) Because these surfaces are arranged regularly, the constraints at all the junctions of surfaces can be reduced to four fundamental situations shown in Table~\ref{tab:assemble_edge_connection}, with the number of edges denoted as $ e_1^+, e_2^+, e_3^+, e_4^+$ and $ e_1^-, e_2^-, e_3^-, e_4^-$. Other situations can be obtained by rigid transformation of these four situations.}

\begin{figure}[t]
  \centering  \includegraphics[width=0.45\textwidth]{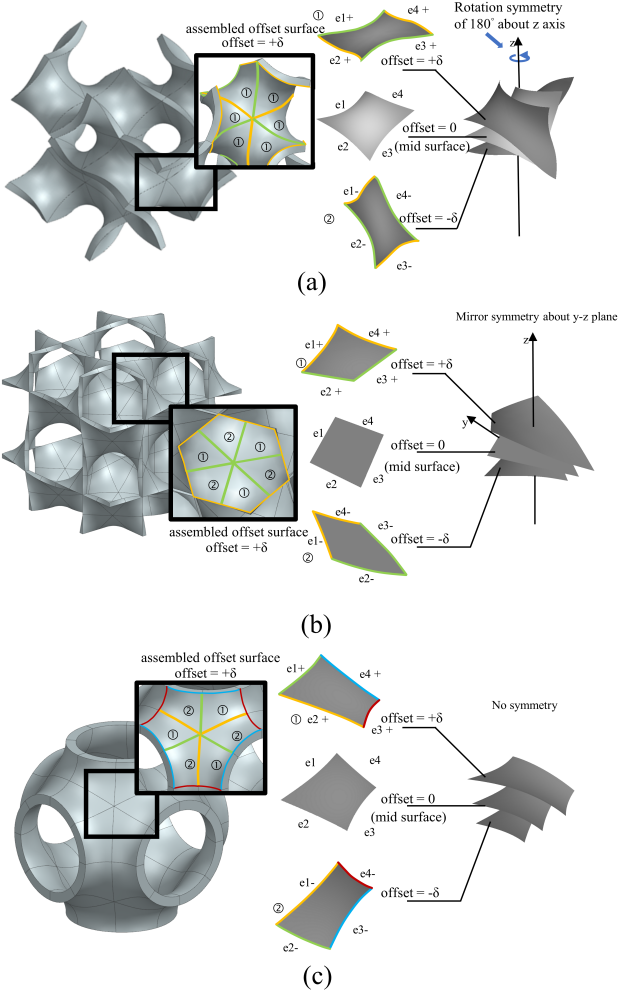}
  \caption{Illustration of the $C^2 $ \rev{continuity preserving}{continuity-preserving} method: (a) Gyroid; (b) Diamond; (c) Schwarz\_P. The edges of the surfaces are divided into two, two, and four types with different colors for Gyroid, Diamond, and Schwarz\_P TPMS respectively. Only the edges with the same color can be spliced together during assembly.}
  \label{surface_symmetry_g2}
\end{figure}
\rev{}{\begin{table}
    \renewcommand{\arraystretch}{1.2}
    \footnotesize
    \setlength{\abovecaptionskip}{0cm}
    \setlength\extrarowheight{3pt}
    \caption{Constraints of the edges of primitive surfaces for Gyroid, Diamond, and Schwarz\_P. $ \mathbf{T}_{ig} (i=1,2)$, $ \mathbf{T}_{id} (i=1,2)$, and $ \mathbf{T}_{ip} (i=1,2,3,4)$ are given in Section 1 of the supplementary material.}
    \centering
    \setlength{\tabcolsep}{3mm} {
    \begin{tabular}{l l}
    \hline
         \multirow{1}{*}{Model Type} & \multirow{1}{*}{Constraints} \\
         \hline
         \multirow{1}{*}{Gyroid} & \multirow{1}{*}{$e_{1}^+ = \mathbf{T}_{1g}e_{2}^-$ \quad $e_{2}^+ = \mathbf{T}_{1g}^{-1}e_{1}^-$ \quad $ e_{3}^+ =\mathbf{T}_{2g}e_{4}^-$ \quad $ e_{4}^+ =\mathbf{T}_{2g}^{-1}e_{3}^-$} \\
         \hline
         \multirow{1}{*}{Diamond} & \multirow{1}{*}{$e_{1}^+ = \mathbf{T}_{1d}e_{4}^-$ \quad $e_{2}^+ = \mathbf{T}_{2d}e_{3}^-$ \quad $e_{3}^+ = \mathbf{T}_{2d}^{-1}e_{2}^-$ \quad $e_{4}^+ = \mathbf{T}_{1d}^{-1}e_{1}^-$} \\
         \hline
         \multirow{1}{*}{Schwarz\_P} & \multirow{1}{*}{$e_{1}^+ = \mathbf{T}_{1p}e_{1}^+$ \quad $e_{2}^+ = \mathbf{T}_{2p}e_{2}^+$ \quad $e_{3}^+ = \mathbf{T}_{3p}e_{3}^+$ \quad $e_{4}^+ = \mathbf{T}_{4p}e_{4}^+$} \\
         \hline
    \end{tabular} }
    \label{tab:assemble_edge_connection}
\end{table}}

\rev{}{For each constraint in Table~\ref{tab:assemble_edge_connection}, a constraint equation is derived to adjust the corresponding control points of the approximation surface to keep the first-order and second-order derivatives at the junctions to be the same~\cite{2017_Hu_G2_continuity_control_points_constraints}. Because the edge-edge connection relationships for the three types of TPMS are different, their constraints are different and thus considered separately.}

\textbf{Gyroid.} 
\rev{As is shown in Fig.~\ref{surface_symmetry_g2}a, the assembled offset surface is constructed by the copy and transformation of a single primitive surface with $ offset=+\delta$. The specific rigid transformation matrices for assembly can be found in Table 1 in ~\cite{2021_Flores_TPMS_NURBS_generation_Gyroid}. (We omit the details of the matrices here because they are not new results. Also, the specific entries in those matrices are not important, nor do they affect the presentation of this work.) To ensure $C^2 $ continuity when assembling these transformed surfaces, constraints need to be added where surfaces meet.}{}
\rev{To derive the continuity constraints, we use two connected surfaces, denoted as $S_1$ and $S_2$, as an example. The four edges of S1 are denoted as ${e_1,e_2,e_3,e_4}$, and the four edges of S2 are denoted as ${e\prime_1,e\prime_2,e\prime_3,e\prime_4}$, see Fig.~\ref{surface_symmetry_g2}a for their arrangement. For Gyroid, edge $e_1$ is connected with edge $ e\prime_2$. Since $S_2$ is a rigid transformed version of $S_1$, i.e., $S_2=\mathbf{T}S_1$, we have $ e\prime_2=\mathbf{T}e_2$. Then there exists a relationship between $e_1$ and $e_2$:}{}
\begin{equation*}
    e_1=e\prime_2=\mathbf{T}e_2
\end{equation*}
\rev{Generally, to control the surface continuity, constraints should be added to $e_1$ and $e\prime_2$, which belong to two different surfaces. But in our situation, the constraint, i.e., e1=Te2, is actually added within one primitive surface. Based on this observation, constraints are added between the edges within one primitive surface for $C^2$ continuity preservation. Based on this observation, constraints are added between the edges within one primitive
surface for $C^2$ continuity preservation.}{}
\rev{To preserve $C^2$ continuity, the first-order and second-order derivatives of the surfaces at the junctions need to be kept the
 same[27]. Noticing that Gyroid TPMS has an intrinsic rotation symmetry (shown in Fig.5a), a tougher constraint is utilized to simplify the control approach. That is, the intrinsic rotation symmetry is maintained in the approximation surface. Although this constraint is tougher since it brings in constraints for more control points than the general method, the calculation is simplified because the constraint matrix shows a high degree
 of symmetry. Specifically, the constraint takes the following form:}{}
\begin{equation*}
    \mathbf{M_gPA}=\mathbf{P}
\end{equation*}
\rev{where $\mathbf{P}$ is the vector of control points. The matrix $\mathbf{M_g}$ and $\mathbf{A}$ are utilized to map the coordinates of the control points to the corresponding position under the rotation symmetry mentioned before. Their derivations are complex, and we have put them in Section 1 of the supplementary material to save pages.}{}
\rev{}{For Gyroid TPMS, the two primitive surfaces used to assemble the offset surface have the same shape (shown in Fig.~\ref{surface_symmetry_g2}a). That is, the primitive surface with $ offset=-\delta$ can be obtained by taking a specific rigid transformation, denoted as $\mathbf{T}_{3g}$, to the primitive surface with $ offset=+\delta$. $\mathbf{T}_{3g}$ is given in Section 1 of the supplementary material. Based on this observation, the splicing of two primitive surfaces can be considered as splicing one primitive surface with another primitive surface copied and transformed by itself, and the constraints for different edges of the single primitive surface are derived from the constraints for edges of two primitive surfaces in Table~\ref{tab:assemble_edge_connection}, given as:}
\begin{small}
    \begin{equation}\label{gyroid_edge_edge_relation}
   e_{1}^+ = \mathbf{T}_{4g}e_{1}^+, \quad  e_{2}^+ = \mathbf{T}_{5g}e_{2}^+, \quad e_{3}^+ = \mathbf{T}_{6g}e_{3}^+, \quad e_{4}^+ = \mathbf{T}_{7g}e_{4}^+
\end{equation}
\end{small}
\rev{}{where $ \mathbf{T}_{4g}$-$ \mathbf{T}_{7g}$ are derived from $ \mathbf{T}_{1g}$-$ \mathbf{T}_{3g}$ and they are given in Section 1 of the supplementary material.}

\rev{}{To ensure $C^2$ continuity at the junctions, part of the control points of the approximation surface that control the first-order and second-order derivatives of the two connected edges are extracted and adjusted for consistency.
Specifically, for the four constraints in Eq.~\ref{gyroid_edge_edge_relation}, if the control points of the approximation surface are denoted as $ \mathbf{P}$ and the group of matrices used to extract the corresponding control points are denoted as $ \mathbf{M}_{ijg} (i=1,2,3,4,5$ and $j=1,2,3,4)$, then the constraint equations are derived as:}
\begin{small}
\begin{align}\label{eq:constraint_g}
\begin{aligned}
    &(\mathbf{M}_{2ig}-\mathbf{M}_{1ig})\mathbf{P}\mathbf{T}_{(i+3)g}=(\mathbf{M}_{2ig}-\mathbf{M}_{1ig}) \mathbf{P} \\
    & (3\mathbf{M}_{4ig}-2\mathbf{M}_{3ig}-\mathbf{M}_{5ig}) \mathbf{P}\mathbf{T}_{(i+3)g}=(3\mathbf{M}_{4ig}-2\mathbf{M}_{3ig} -\mathbf{M}_{5ig})\ \mathbf{P}\\
    & i=1,2,3,4
\end{aligned}
\end{align}
\end{small}
\rev{}{where the left side of the equation indicates the first-order and second-order derivatives of one edge and the right side of the equation indicates the first-order or second-order derivatives of the other edge, and the same coefficients on both sides of the equations to calculate the first-order or second-order derivatives of NURBS surfaces are eliminated to simplify the form. $\mathbf{M}_{ijg}$ $ (i=1,2,3,4,5$ and $ j=1,2,3,4)$ are used to extract the first, second, and third rows of control points at the four edges of the approximation surface. Specifically, $(\mathbf{M}_{2ig}-\mathbf{M}_{1ig})\mathbf{P}$ $ (i=1,2,3,4)$ are utilized to calculate the first-order derivative of the surface, and $ (3\mathbf{M}_{4ig}-2\mathbf{M}_{3ig}-\mathbf{M}_{5ig})\mathbf{P}$ $ (i=1,2,3,4)$ are utilized to calculate the second-order derivative of the surface. Their details are provided in Section 1 of the supplementary material.}

\textbf{Diamond.} 
\rev{For Diamond TPMS, the assembled offset surface is constructed by the copy and transformation of two different primitive surfaces with $ offset=+\delta$ and $ offset=-\delta$ (shown in Fig.~\ref{surface_symmetry_g2}b). Again, take two connected surfaces $S_1$ and $S_2$ as an example. $S_1$ has four edges $ e_1$, $ e_2$, $ e_3$, and $e_4$, and $S_2$ has four edges $ e_1^{\prime} $, $ e_2^{\prime} $, $ e_3^{\prime} $, and $e_4^{\prime} $. For Diamond, edge $ e_1$ of $S_1$ is connected with edge $ e_1^{\prime}$ of $S_2$, then $ e_1$ could be expressed as the transformation of $ e_1^{\prime}$ with a rigid transformation matrix denoted as $\mathbf{T}$. That is, $ e_1=\mathbf{T}e_1^{\prime}$. Applying this reasoning to all edges (i.e., $e_{1}^+$, $e_{2}^+$, $e_{3}^+$, $e_{4}^+$, $e_{1}^-$, $e_{2}^-$, $e_{3}^-$, $e_{4}^-$) gives the following edge-edge relationships:}{}
\begin{equation*}
   e_{1}^+ = \mathbf{T}_1e_{1}^-, \quad  e_{2}^+ = \mathbf{T}_2e_{2}^-, \quad e_{3}^+ = \mathbf{T}_3e_{3}^-, \quad e_{4}^+ = \mathbf{T}_4e_{4}^-
\end{equation*}
\rev{where $\mathbf{T}_1$-$ \mathbf{T}_4$ are the rigid transformation matrices given in Section 1 of the supplementary material.}{}

\rev{Similar to the Gyroid case, $C^2$ continuous constraints are imposed on the first-order and second-order derivatives of $S_1$ and $S_2$. Specifically, the constraints take the following form:}{}
\begin{align*}
\begin{aligned}
        &(\mathbf{M}_{2i}-\mathbf{M}_{1i})\mathbf{N}_2 \mathbf{P}\mathbf{A}_i^{-1}=(\mathbf{M}_{2i}-\mathbf{M}_{1i})\mathbf{N}_1 \mathbf{P} \\
    & (3\mathbf{M}_{4i}-2\mathbf{M}_{3i}-\mathbf{M}_{5i})\mathbf{N}_2 \mathbf{P}\mathbf{A}_i^{-1} =(3\mathbf{M}_{4i}-2\mathbf{M}_{3i} -\mathbf{M}_{5i})\mathbf{N}_1 \mathbf{P}\\
    & (\mathbf{M}_{2i}-\mathbf{M}_{1i})\mathbf{N}_1 \mathbf{P}\mathbf{A}_i=(\mathbf{M}_{2i}-\mathbf{M}_{1i})\mathbf{N}_2 \mathbf{P} \\
    &(3\mathbf{M}_{4i}-2\mathbf{M}_{3i}-\mathbf{M}_{5i})\mathbf{N}_1 \mathbf{P}\mathbf{A}_i=(3\mathbf{M}_{4i}-2\mathbf{M}_{3i}-\mathbf{M}_{5i})\mathbf{N}_2 \mathbf{P}\\
    &i=1,2,3,4
\end{aligned}
\end{align*}
\rev{where $\mathbf{P} =\begin{pmatrix}
 \mathbf{P}_1\\
\mathbf{P}_2
\end{pmatrix}$. $\mathbf{P}_1$ and $\mathbf{P}_2$ are the homogeneous coordinates of control points for the two approximation surfaces. $ \mathbf{M}_{1i}$-$\mathbf{M}_{5i} $ $ (i=1,2,3,4)$ are used to extract the first, second, and third rows of control points at the four edges of the approximation surface. $\mathbf{M}_{2i}-\mathbf{M}_{1i}$ $ (i=1,2,3,4)$ are utilized to calculate the first-order derivative of the surface, and $ 3\mathbf{M}_{4i}-2\mathbf{M}_{3i}-\mathbf{M}_{5i}$ $ (i=1,2,3,4)$ are utilized to calculate the second-order derivative of the surface. $ \mathbf{N}_1$ and $\mathbf{N}_2$ are used to assemble the control points $\mathbf{P}_1$ and $ \mathbf{P}_2$ to $\mathbf{P}$. Matrices $\mathbf{A}_i $ $ (i=1,2,3,4)$ and their inverse matrices are the rigid transformation matrices. Their details can be found in Section 1 of the supplementary material.}{}

\rev{}{For Diamond TPMS, the two primitive surfaces for assembling have different shapes (shown in Fig.~\ref{surface_symmetry_g2}b).
To ensure $C^2$ continuity with the same approach for Gyroid, the continuity for each constraint (shown in Table~\ref{tab:assemble_edge_connection}) is considered during the approximation of these two primitive surfaces.
Similar to Gyroid, the $C^2$ continuity constraint equations are derived as:}
\begin{small}
\begin{align}\label{eq:constraint_d}
\begin{aligned}
        &(\mathbf{M}_{24d}-\mathbf{M}_{14d})\mathbf{N}_{i} \mathbf{P}\mathbf{T}_{1d}=(\mathbf{M}_{21d}-\mathbf{M}_{11d})\mathbf{N}_{3-i} \mathbf{P} \\
    & (3\mathbf{M}_{44d}-2\mathbf{M}_{34d}-\mathbf{M}_{54d})\mathbf{N}_{i} \mathbf{P}\mathbf{T}_{1d} =(3\mathbf{M}_{41d}-2\mathbf{M}_{31d} -\mathbf{M}_{51d})\mathbf{N}_{3-i} \mathbf{P} \\
    & (\mathbf{M}_{23d}-\mathbf{M}_{13d})\mathbf{N}_{i} \mathbf{P}\mathbf{T}_{2d}=(\mathbf{M}_{22d}-\mathbf{M}_{12d})\mathbf{N}_{3-i} \mathbf{P} \\
    &(3\mathbf{M}_{43d}-2\mathbf{M}_{33d}-\mathbf{M}_{53d})\mathbf{N}_{i} \mathbf{P}\mathbf{T}_{2d}=(3\mathbf{M}_{42d}-2\mathbf{M}_{32d}-\mathbf{M}_{52d})\mathbf{N}_{3-i} \mathbf{P} \\
    & (i=1,2)
\end{aligned}
\end{align}
\end{small}
\rev{}{where $\mathbf{P} =\begin{pmatrix}
 \mathbf{P}_1\\
\mathbf{P}_2
\end{pmatrix}$. $\mathbf{P}_1$ and $\mathbf{P}_2$ are the homogeneous coordinates of control points for the two approximation surfaces. $ \mathbf{M}_{ijd}$ $ (i=1,2,3,4,5$ and $ j=1,2,3,4)$ have the same function as $ \mathbf{M}_{ijg}$ $ (i=1,2,3,4,5$ and $ j=1,2,3,4)$ for Gyroid. $ \mathbf{N}_1$ and $\mathbf{N}_2$ are used to assemble the control points $\mathbf{P}_1$ and $ \mathbf{P}_2$ to $\mathbf{P}$. Specifically, $ \mathbf{N}_1\mathbf{P}=\begin{pmatrix}
 \mathbf{P}_1\\
\mathbf{0}
\end{pmatrix}$ and $ \mathbf{N}_2\mathbf{P}=\begin{pmatrix}
 \mathbf{P}_2\\
\mathbf{0}
\end{pmatrix}$. Matrices $\mathbf{T}_{id} $ $ (i=1,2)$ are the rigid transformation matrices given in Table~\ref{tab:assemble_edge_connection}. Their details can be found in Section 1 of the supplementary material.}

\textbf{Schwarz\_P.} 
\rev{For Schwarz\_P TPMS, the two primitive surfaces used to assemble the offset surface have the same shape (shown in Fig.~\ref{surface_symmetry_g2}c). Following the same derivation procedures as in the Diamond case, the edge-edge relationships are:}{}
\begin{equation*}
   e_{1}^+ = \mathbf{T}_5e_{2}^-, \quad  e_{2}^+ = \mathbf{T}_6e_{1}^-, \quad e_{3}^+ = \mathbf{T}_7e_{4}^-, \quad e_{4}^+ = \mathbf{T}_8e_{3}^-
\end{equation*}
\rev{where $ \mathbf{T}_5$-$ \mathbf{T}_8$ are the rigid transformation matrices given in Section 1 of the supplementary material. Edges $e_{1}^+$, $e_{2}^+$, $e_{3}^+$, $e_{4}^+$, $e_{1}^-$, $e_{2}^-$, $e_{3}^-$, $e_{4}^-$ are shown in Fig.~\ref{surface_symmetry_g2}c. Imposing the above constraints on the first-order and second-order derivatives of the approximation surfaces gives the following $C^2$ continuity constraints:}{}
\begin{align*}
\begin{aligned}
    &  (\mathbf{M}_{2i}-\mathbf{M}_{1i})\mathbf{P}\mathbf{A}_i=(\mathbf{M}_{2i}-\mathbf{M}_{1i})\mathbf{P} \\
&  (3\mathbf{M}_{4i}-2\mathbf{M}_{3i}-\mathbf{M}_{5i})\mathbf{P}\mathbf{A}_i=(3\mathbf{M}_{4i}-2\mathbf{M}_{3i}-\mathbf{M}_{5i})\mathbf{P} \\
& i=1,2,3,4
\end{aligned}
\end{align*}
\rev{where $ \mathbf{P}$ is the vector of control points. $ \mathbf{M}_{1i}$-$ \mathbf{M}_{5i} $ $ (i=1,2,3,4)$ are the same matrices as in Diamond. $ \mathbf{A}_i \ (i=1,2,3,4)$ are rigid transformation matrices, whose details have been put in Section 1 of the supplementary material.}{}
\rev{}{For Schwarz\_P TPMS, only one primitive surface is needed to assemble the offset surface (shown in Fig.~\ref{surface_symmetry_g2}c). That is, the offset surface with $offset=-\delta$ can be obtained by copying and transforming the primitive surface with $offset=+\delta$.
The same approach to ensure $C^2$ continuity for Gyroid is used here for each of the constraints in Table~\ref{tab:assemble_edge_connection} and the constraint equations are derived as:}
\begin{small}
    \begin{align}\label{eq:constraint_p}
\begin{aligned}
    &  (\mathbf{M}_{2ip}-\mathbf{M}_{1ip})\mathbf{P}\mathbf{T}_{ip}=(\mathbf{M}_{2ip}-\mathbf{M}_{1ip})\mathbf{P} \\
&  (3\mathbf{M}_{4ip}-2\mathbf{M}_{3ip}-\mathbf{M}_{5ip})\mathbf{P}\mathbf{T}_{ip}=(3\mathbf{M}_{4ip}-2\mathbf{M}_{3ip}-\mathbf{M}_{5ip})\mathbf{P} \\
& i=1,2,3,4
\end{aligned}
\end{align}
\end{small}
\rev{}{where $ \mathbf{P}$ is the vector of control points. $ \mathbf{M}_{ijp}$ $ (i=1,2,3,4,5$ and $ j=1,2,3,4)$ are the same as $ \mathbf{M}_{ijd}$ $ (i=1,2,3,4,5$ and $ j=1,2,3,4)$ for Diamond. Their details have been put in Section 1 of the supplementary material.}

\rev{}{For $C^2$ continuity-preserving, a general problem is that when there are extraordinary points, it is hard to achieve high-order parametric continuity. However, the three types of TPMS structures considered in this paper are intrinsic $C^2$ continuous, effectively avoiding the issue of extraordinary points.}

\subsubsection{The CPIA method}
CPIA has the same major steps as PIA, but with a different iterative scheme. The first step of CPIA is constructing a NURBS surface utilizing the sample points $\mathbf{Q}$ as control points, denoted as $\mathbf{P}^0(u_i,v_j)$. The second step is computing the update term $\Delta$ for $\mathbf{P}^0(u_i,v_j)$. In PIA, this term refers to the deviation from each sample point ${\mathbf{Q}}_{ij}$ to its associative point on the approximation surface $\mathbf{P}^0(u,v)$. However, in CPIA, an additional constraint term should also be considered to ensure that the approximation surface satisfies the continuity constraints in Eq.~\eqref{eq:constraint_g}, Eq.~\eqref{eq:constraint_d}, and Eq.~\eqref{eq:constraint_p}.

For Gyroid, this term is derived from Eq.~\eqref{eq:constraint_g} by calculating the difference between the two sides of the equation. This difference is actually the deviation between the theoretical value that meets the constraints and the actual value. Then a factor $\frac{1}{2}$ is added to distribute the deviation into two equal parts since the deviation is caused by two sides of the surfaces at their junction. The difference is added to the original PIA update term (i.e., Eq.~\eqref{eq:pia-original-adjustment}) to obtain the new CPIA update term:
\begin{small}
\begin{align}
    \begin{aligned}\label{eq:cpia-g-update}
    \mathbf{\Delta}^0_{ij}&=\mathbf{Q}_{ij}-\mathbf{P}^0(u_i,v_j) \\
    & +\frac{1}{2}\sum_{i=1}^{4}((\mathbf{M}_{2ig}-\mathbf{M}_{1ig})\mathbf{P}^0(u_i,v_j)\mathbf{T}_{(i+3)g}-(\mathbf{M}_{2ig}-\mathbf{M}_{1ig}) \mathbf{P}^0(u_i,v_j))\\
    & +\frac{1}{2}\sum_{i=1}^{4}((3\mathbf{M}_{4ig}-2\mathbf{M}_{3ig}-\mathbf{M}_{5ig}) \mathbf{P}^0(u_i,v_j)\mathbf{T}_{(i+3)g}\\
    &-(3\mathbf{M}_{4ig}-2\mathbf{M}_{3ig} -\mathbf{M}_{5ig})\ \mathbf{P}^0(u_i,v_j))
\end{aligned}
\end{align}
\end{small}

For Diamond, the additional constraint term is derived from Eq.~\eqref{eq:constraint_d} with similar reasoning: calculating the difference between the two sides of the equation and multiplying it with a factor $\frac{1}{2}$, and then adding it to Eq.~\eqref{eq:pia-original-adjustment}, yielding:
\begin{small}
\begin{align}
\begin{aligned}
            \mathbf{\Delta}^0_{ij}&=\mathbf{Q}_{ij}-\mathbf{P}^0(u_i,v_j)\\
        &+\frac{1}{2}\sum_{i=1}^{2} ((\mathbf{M}_{24d}-\mathbf{M}_{14d})\mathbf{N}_{i} \mathbf{P}^0(u_i,v_j)\mathbf{T}_{1d}-(\mathbf{M}_{21d}-\mathbf{M}_{11d})\mathbf{N}_{3-i} \mathbf{P}^0(u_i,v_j)) \\
    &+\frac{1}{2}\sum_{i=1}^{2} ((3\mathbf{M}_{44d}-2\mathbf{M}_{34d}-\mathbf{M}_{54d})\mathbf{N}_{i} \mathbf{P}^0(u_i,v_j)\mathbf{T}_{1d} \\
    &-(3\mathbf{M}_{41d}-2\mathbf{M}_{31d} -\mathbf{M}_{51d})\mathbf{N}_{3-i} \mathbf{P}^0(u_i,v_j))\\
&+\frac{1}{2}\sum_{i=1}^{2} ((\mathbf{M}_{23d}-\mathbf{M}_{13d})\mathbf{N}_{i} \mathbf{P}^0(u_i,v_j)\mathbf{T}_{2d}-(\mathbf{M}_{22d}-\mathbf{M}_{12d})\mathbf{N}_{3-i} \mathbf{P}^0(u_i,v_j)) \\
    &+\frac{1}{2}\sum_{i=1}^{2}((3\mathbf{M}_{43d}-2\mathbf{M}_{33d}-\mathbf{M}_{53d})\mathbf{N}_{i} \mathbf{P}^0(u_i,v_j)\mathbf{T}_{2d} \\
    &-(3\mathbf{M}_{42d}-2\mathbf{M}_{32d}-\mathbf{M}_{52d})\mathbf{N}_{3-i} \mathbf{P}^0(u_i,v_j))
\end{aligned}
\end{align}
\end{small}

For Schwarz\_P, the update term follows directly as:
\begin{small}
\begin{align}
\begin{aligned}
        \mathbf{\Delta}^0_{ij}&=\mathbf{Q}_{ij}-\mathbf{P}^0(u_i,v_j)\\
    &+\frac{1}{2} \sum_{i=1}^{4} ((\mathbf{M}_{2ip}-\mathbf{M}_{1ip})\mathbf{P}^0(u_i,v_j)\mathbf{T}_{ip}-(\mathbf{M}_{2ip}-\mathbf{M}_{1ip})\mathbf{P}^0(u_i,v_j)) \\
     &+\frac{1}{2}\sum_{i=1}^{4}((3\mathbf{M}_{4ip}-2\mathbf{M}_{3ip}-\mathbf{M}_{5ip})\mathbf{P}^0(u_i,v_j)\mathbf{T}_{ip} \\
     &-(3\mathbf{M}_{4ip}-2\mathbf{M}_{3ip}-\mathbf{M}_{5ip})\mathbf{P}^0(u_i,v_j))
\end{aligned}
\end{align}
\end{small}

Having obtained the update terms, the third step utilizes them to update the control points:
\begin{equation}\label{eq:cpia-original-iteration}
    \mathbf{P}^1_{ij}=\mathbf{P}^0_{ij}+\mathbf{\Delta}^0_{ij}
\end{equation}

Eqs.~\eqref{eq:cpia-g-update}-\eqref{eq:cpia-original-iteration} will be repeated until the approximation error is below a given tolerance. Algorithm~\ref{algo:cpia} gives more details on these repetitive procedures. 

\begin{algorithm}[h]
    \caption{The CPIA Algorithm}
    \label{algo:cpia}
    \begin{algorithmic}[1]
        \Require An ordered point set $\{\mathbf{Q}_{ij}\}^{m_1,m_2}_{i=0,j=0}$ and corresponding parameters $\{\xi_i,\zeta_j\}^{m_1,m_2}_{i=0,j=0}$, weight of control points $w_{ij}$, knot vectors $U=\{0=u_0<u_1<\cdots<u_n=1\}$ and $V=\{0=v_0<v_1<\cdots<v_n=1\}$, tolerance $\epsilon$, constraint term $ \mathbf{C}$
        \Ensure a set of control points $\{\mathbf{P}_{kl}\}_{k=0,l=0}^{n_1,n_2}$ interpolating the point set $\{\mathbf{Q}_{ij}\}^{m_1,m_2}_{i=0,j=0}$
    \State $\{\mathbf{P}_{kl}\}_{k=0,l=0}^{n_1,n_2} \leftarrow \{\mathbf{Q}_{ij}\}^{m_1,m_2}_{i=0,j=0}$ \Comment{Initialize the control points with sample points}
    \State $FP \leftarrow $EvaluateFittedSurfacePoints($ \{\mathbf{P}_{kl}\}_{k=0,l=0}^{n_1,n_2}, U,V,$
    \Statex $ \{\xi_i,\zeta_j\}^{m_1,m_2}_{i=0,j=0},w_{ij}$)
    \State $\delta_{ij} \leftarrow$ AddConstraints($\{\mathbf{P}_{kl}\}_{k=0,l=0}^{n_1,n_2} ,\mathbf{C} $)
    \State $\Delta_{ij} \leftarrow$ CalculateDeviation($ FP, \{\mathbf{Q}_{ij}\}^{m_1,m_2}_{i=0,j=0}, \delta_{ij}$)
    \While{$\Delta_{ij} > \epsilon $}
        \State $\{\mathbf{P}_{kl}\}_{k=0,l=0}^{n_1,n_2} \leftarrow$ RegenerateControlPoints($ \Delta_{ij}$)
        \State $FP \leftarrow $EvaluateFittedSurfacePoints($ \{\mathbf{P}_{kl}\}_{k=0,l=0}^{n_1,n_2}, U,V, $
        \Statex $\{\xi_i,\zeta_j\}^{m_1,m_2}_{i=0,j=0},w_{ij}$)
        \State $\delta_{ij} \leftarrow$ AddConstraints($\{\mathbf{P}_{kl}\}_{k=0,l=0}^{n_1,n_2} ,\mathbf{C} $)
        \State $\Delta_{ij} \leftarrow$ CalculateDeviation($ FP, \{\mathbf{Q}_{ij}\}^{m_1,m_2}_{i=0,j=0}, \delta_{ij}$)
    \EndWhile
    \end{algorithmic}
\end{algorithm}

\subsection{CPIA convergence proof}
\label{sec:proof}
Considering the page limit, only the convergence proof for Gyroid CPIA is given here. Those for Diamond and Schwarz\_P basically follow the same proof procedures but with different equations. We have put them in Section 2 of the supplementary material.

\begin{proposition}
    The CPIA method for Gyroid is convergent and the limit surface is the least-square approximation outcome of the initial data $\mathrm{\{\mathbf{Q}_{ij}\}^{m_1,m_2}_{i=0,j=0}}$.
\end{proposition}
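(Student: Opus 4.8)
The plan is to reduce the Gyroid CPIA recursion \eqref{eq:cpia-g-update}--\eqref{eq:cpia-original-iteration} to a single affine fixed-point iteration on the stacked (weight-scaled) control points, prove that its iteration matrix is a contraction in an appropriate norm, and then identify the unique limit with the least-squares NURBS fit of $\{\mathbf{Q}_{ij}\}$ compatible with the constraints \eqref{eq:constraint_g}. Concretely, absorbing the fixed weights $w_{ij}$ into the control points, I would let $\mathbf{B}$ be the B-spline collocation operator so that the sampled approximation values are $\mathbf{B}\mathbf{P}^k$, let $\mathbf{\Phi}$ be the symmetric positive definite operator governing the pure PIA data-fitting correction (so that this part of the update is $\mathbf{P}^k \mapsto \mathbf{P}^k - \mathbf{\Phi}\mathbf{P}^k + \mathbf{b}$, with limit the solution of \eqref{eq:approximation-linear-equation}, as established in \cite{2018_lin_survey_pia}), and let $\mathbf{C}_g$ be the linear operator obtained by summing over the four edge constraints in \eqref{eq:constraint_g} the first-order blocks $(\mathbf{M}_{2ig}-\mathbf{M}_{1ig})(\cdot)\mathbf{T}_{(i+3)g}-(\mathbf{M}_{2ig}-\mathbf{M}_{1ig})(\cdot)$ together with their second-order counterparts. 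Then \eqref{eq:cpia-g-update}--\eqref{eq:cpia-original-iteration} collapse to
\[
  \mathbf{P}^{k+1} = \Big(\mathbf{I} - \big(\mathbf{\Phi} - \tfrac12\mathbf{C}_g\big)\Big)\mathbf{P}^k + \mathbf{b} =: \mathbf{\Psi}_g\mathbf{P}^k + \mathbf{b}.
\]

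The core step is to show $\rho(\mathbf{\Psi}_g) < 1$. For the $\mathbf{\Phi}$-part this is the classical PIA/LSPIA fact: the B-spline bases are non-negative and partition unity, so the collocation matrix is totally positive, $\mathbf{\Phi}$ is symmetric positive definite with spectrum in $(0,2)$ under the normalization used here, and hence $\mathbf{I}-\mathbf{\Phi}$ is convergent \cite{2018_lin_survey_pia}. The new ingredient is the constraint block $\tfrac12\mathbf{C}_g$. Here I would exploit its algebraic structure: each $\mathbf{M}_{kig}$ is a row-selection matrix of operator norm one and each $\mathbf{T}_{(i+3)g}$ is a rigid motion on homogeneous coordinates, so $\mathbf{C}_g$ is bounded and has the ``gap'' form $\mathbf{C}_g = -(\mathbf{I}-\mathbf{J}_g)\mathbf{E}_g$, where $\mathbf{E}_g$ extracts the derivative-controlling rows at the four edges and $\mathbf{J}_g$ averages the two sides of each edge identity; the $\tfrac12$ weight is precisely what makes $\mathbf{P}\mapsto\mathbf{P}+\tfrac12\mathbf{C}_g\mathbf{P}$ non-expansive on the constraint residual. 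Combining this with the strict contraction coming from $\mathbf{\Phi}$, one shows that $\mathbf{\Psi}_g$ has numerical range inside the open unit disk (equivalently, that the symmetric part of $\mathbf{\Phi}-\tfrac12\mathbf{C}_g$ is positive definite and its skew part does not push eigenvalues out of $\{|z-1|<1\}$), so $\rho(\mathbf{\Psi}_g) < 1$ and $\{\mathbf{P}^k\}$ converges to a unique limit $\mathbf{P}^\ast$ independent of the initial net.

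It then remains to identify $\mathbf{P}^\ast$. Letting $k\to\infty$ in the recursion gives the fixed-point equation $\big(\mathbf{\Phi}-\tfrac12\mathbf{C}_g\big)\mathbf{P}^\ast = \mathbf{b}$, and I would verify that this is the normal-equation system characterizing the minimizer of $\sum_{i,j}\|\mathbf{Q}_{ij}-\mathbf{S}(u_i,v_j)\|_2^2$ over control nets satisfying \eqref{eq:constraint_g}: since $\mathbf{\Psi}_g$ is a strict contraction, $\mathbf{I}-\mathbf{\Psi}_g$ is invertible, so this solution is unique, and one checks by matching terms that the least-squares solution of \eqref{eq:approximation-linear-equation} satisfies \eqref{eq:constraint_g} whenever the sampling is built with the intrinsic symmetry of the fundamental patch (so that the extra CPIA terms vanish on it), whence the two coincide and $\mathbf{P}^\ast$ is exactly the least-squares approximation outcome of the data. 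The same scheme, with $\mathbf{C}_d$ and $\mathbf{C}_p$ read off from \eqref{eq:constraint_d} and \eqref{eq:constraint_p}, handles Diamond and Schwarz\_P.

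The \emph{main obstacle} is the spectral-radius step. Because $\mathbf{C}_g$ is not symmetric and does not commute with $\mathbf{\Phi}$, $\mathbf{\Psi}_g$ is a genuinely non-normal perturbation of the classical totally-positive PIA matrix, so one cannot simply superpose eigenvalues or invoke Perron--Frobenius-type bounds; the argument must use the concrete structure of $\mathbf{C}_g$ --- that it is assembled from orthogonal rigid transforms and complementary selection matrices and carries the $\tfrac12$ weight --- to control the field of values of $\mathbf{\Psi}_g$ (or to exhibit an inner product in which it contracts) rather than merely its spectrum. This quantitative estimate is where the bulk of the work lies, and it is also the step that must be re-checked for the Diamond and Schwarz\_P variants.
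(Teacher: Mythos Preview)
Your plan and the paper's proof diverge at exactly the step you flag as the main obstacle. The paper never attempts to bound the spectral radius (or field of values) of the full iteration operator $\mathbf{\Psi}_g$. Instead it exploits the \emph{two-sided} shape of the Gyroid constraint terms: each one is of the form $(\text{left selection/difference matrix})\cdot\mathbf{P}\cdot\mathbf{T}_{(i+3)g}$ with $\mathbf{T}_{(i+3)g}$ a rigid transformation acting on the coordinate side. Writing $\mathbf{r}^k=\mathbf{P}^k-(\mathbf{Bw})^{-1}\mathbf{Q}$ and unrolling the recursion, every term in $\mathbf{r}^{k+1}$ factors as (product of left matrices)$\cdot\mathbf{r}^0\cdot$(product of $\mathbf{T}$'s). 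Because the right factors are isometries, each term has the same $2$-norm as the corresponding term of an auxiliary sequence $\mathbf{s}^{k}$ obtained by deleting all the $\mathbf{T}$'s; the binomial theorem then collapses $\mathbf{s}^{k}$ to $(\mathbf{I}-\mathbf{Bw})^{k}\mathbf{r}^0$, and the classical PIA bound $\rho(\mathbf{I}-\mathbf{Bw})<1$ from \cite{2005_Lin_pia_for_NURBS} finishes the argument, yielding directly $\mathbf{P}^\infty=(\mathbf{Bw})^{-1}\mathbf{Q}$. So the paper's route is more elementary: it trades your operator-theoretic estimate for a term-by-term norm comparison that reduces everything to unconstrained PIA.

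There is also a concrete gap in your sketch. The assertion that the $\tfrac12$ weight makes $\mathbf{P}\mapsto\mathbf{P}+\tfrac12\mathbf{C}_g\mathbf{P}$ non-expansive on the constraint residual is not justified and is unlikely to hold as stated: the second-order blocks carry coefficients $3,-2,-1$ (from $3\mathbf{M}_{4ig}-2\mathbf{M}_{3ig}-\mathbf{M}_{5ig}$), so the Euclidean operator norm of $\mathbf{C}_g$ can exceed $2$, and ``row-selection of norm one plus orthogonal $\mathbf{T}$'' does not control it. Your field-of-values strategy then loses its handle on the perturbation, because combining a strict contraction with a map that is merely bounded (not non-expansive) does not give a contraction. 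The paper's device sidesteps this entirely: it never bounds $\mathbf{C}_g$ as an operator, only uses that the right-acting factors are isometries so that the left-acting coefficients telescope away via the binomial identity rather than being estimated.
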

\begin{proof}
Using CPIA, a sequence of control points and offset surfaces is generated. To show its convergence, let $
\mathbf{P}^k=\{\mathbf{P}_0^k,\mathbf{P}_1^k,\cdots,\mathbf{P}_n^k\}^T$ and $\mathbf{Q}=\{\mathbf{Q}_0,\mathbf{Q}_1,\cdots,\mathbf{Q}_m \}^T$ denote the control points at $k$-th iteration and the sample points, respectively. In the $(k+1)$th iteration, we have:
\begin{small}
    \begin{align}\label{eq:iterative-scheme-1}
        \begin{aligned}
        \mathbf{P}^{k+1} = &\mathbf{P}^k+(\mathbf{Q-BwP}^k) \\
        &+\frac{1}{2}\sum_{i=1}^{4}((\mathbf{M}_{2ig}-\mathbf{M}_{1ig})\mathbf{P}^k\mathbf{T}_{(i+3)g}-(\mathbf{M}_{2ig}-\mathbf{M}_{1ig}) \mathbf{P}^k) \\
        &+\frac{1}{2}\sum_{i=1}^{4}((3\mathbf{M}_{4ig}-2\mathbf{M}_{3ig}-\mathbf{M}_{5ig}) \mathbf{P}^k\mathbf{T}_{(i+3)g} \\
        &-(3\mathbf{M}_{4ig}-2\mathbf{M}_{3ig} -\mathbf{M}_{5ig})\ \mathbf{P}^k) \\
        \end{aligned}
    \end{align}
\end{small}
\rev{The matrices $ \mathbf{M_g}$ and $\mathbf{A}$ are the same as those in Sec.~\ref{sec:nurbsfit}, and they are orthogonal matrices, i.e., $\mathbf{M_g}^{2}=\mathbf{I}$ and $\mathbf{A}^{2}=\mathbf{I}$. Let $\mathbf{r}^k=\mathbf{P}^k-\mathbf{B}^{-1}\mathbf{Q}$, $ \ \mathbf{D}=\frac{\mathbf{I}}{2} -\mathbf{B}$. Then Eq. (20) can be rewritten as:}{}
\begin{small}
    \begin{align*}
        \begin{aligned}
                &\mathbf{r}^{k+1} =  \mathbf{D}\mathbf{r}^k+\frac{1}{2}\mathbf{M_g}\mathbf{r}^k\mathbf{A} \\
            & = \mathbf{D}(\mathbf{D}\mathbf{r}^{k-1}+\frac{\mathbf{M_g}}{2}\mathbf{r}^{k-1}\mathbf{A})+\frac{\mathbf{M_g}}{2}(\mathbf{D}\mathbf{r}^{k-1}+\frac{\mathbf{M_g}}{2}\mathbf{r}^{k-1}\mathbf{A})\mathbf{A} \\
            & =\mathbf{D}^2\mathbf{r}^{k-1}+2\mathbf{D}\frac{\mathbf{M_g}}{2}\mathbf{r}^{k-1}\mathbf{A}+(\frac{1}{2})^2\mathbf{r}^{k-1}\\
            & = \cdots \\
            & = \sum_{i=0}^{k+1} { k+1\choose i}\mathbf{D}^i(\frac{\mathbf{M_g}}{2})^{k+1-i}\mathbf{r}^0\mathbf{A}^{k+1-i}
        \end{aligned}
    \end{align*}
\end{small}
\rev{Suppose $\{ \lambda_i(\mathbf{M_g}) \}(i=0,1,\cdots,m^2-1)$ and $\{ \lambda_j(\mathbf{A}) \}(j=0,1,2)$ are the eigenvalues of $ \mathbf{M_g}$ and $ \mathbf{A}$ sorted in non-decreasing order. Since $\left | \lambda_i(\mathbf{M_g}) \right |=1$ and $\left | \lambda_i(\mathbf{A}) \right |=1$, their powers have no effect on convergence. Let}{}
\begin{equation*}
    \mathbf{s}^{k+1}=\sum_{i=0}^{k+1} { k+1\choose i}\mathbf{D}^i(\frac{\mathbf{M_g}}{2})^{k+1-i}\mathbf{r}^0\mathbf{A}^{k+1-i}
\end{equation*}
\rev{The convergence of $\mathbf{r}^{k+1}$ is thus the same as $\mathbf{s}^{k+1}$. Then we only need to show that $\{ \mathbf{s}^{k}\}$ is convergent. With the Binomial theorem, we get:}{}
\begin{align*}
    \mathbf{s}^{k+1}=&(\mathbf{D}+\frac{\mathbf{I}}{2})^{k+1}\mathbf{r}^0 \\
    &=(\mathbf{I}-\mathbf{B})^{k+1}\mathbf{r}^0
\end{align*}
\rev{From Theorem 2.2 in [40], we know that $\rho (\mathbf{I}-\mathbf{B})< 1$, where $\rho (\mathbf{I}-\mathbf{B})$ is the spectral radius of $\mathbf{I}-\mathbf{B}$. Therefore, we have the following equation:}{}
\begin{align*}
\lim_{k \to \infty} (\mathbf{I}-\mathbf{B})^k= \lim_{k \to \infty} \mathbf{s}^{k+1} =(\mathbf{0})_{n+1}
\end{align*}
\rev{ It is equivalent to:}{}
\begin{align*}
\lim_{k \to \infty} \mathbf{P}^k-\mathbf{B}^{-1}\mathbf{Q}=(\mathbf{0})_{n+1}
\end{align*}

\rev{}{The matrices $ \mathbf{T}_{ig}$ $ (i=4,5,6,7)$ are mentioned in Sec.~\ref{sec:nurbsfit}, and they are invertible matrices where the absolute values of the eigenvalues are all 1, i.e., $\left | \lambda_i ( \mathbf{T}_{jg})\right| = 1, i=1,2,3,4$ and $ j=4,5,6,7$. The matrices $ \mathbf{M}_{ijg}$ $ (i=1,2,3,4,5$ and $ j=1,2,3,4)$ are also given in Sec.~\ref{sec:nurbsfit}. $\mathbf{B}$ is the B-spline basis function matrix and $\mathbf{w}$ is the weight matrix. They are invertible. Let $\mathbf{r}^k=\mathbf{P}^k-\mathbf{w}^{-1}\mathbf{B}^{-1}\mathbf{Q}$, $ \ \mathbf{D}=\mathbf{I} -\mathbf{Bw}$, $\alpha_i =3\mathbf{M}_{4ig}-2\mathbf{M}_{3ig}-\mathbf{M}_{5ig}$, and $\beta_i = \mathbf{M}_{2ig}-\mathbf{M}_{1ig} (i=1,2,3,4)$. Then Eq.~\eqref{eq:iterative-scheme-1} can be rewritten as:}
\begin{small}
    \begin{align}
        \begin{aligned}
                &\mathbf{r}^{k+1} =  (\mathbf{D}-\frac{1}{2}\sum_{i=0}^{4}(\alpha_i+\beta_i))\mathbf{r}^k+\frac{1}{2}\sum_{i=0}^{4}(\alpha_i+\beta_i) \mathbf{r}^k\mathbf{T}_{(i+3)g} \\
            & = (\mathbf{D}-\frac{1}{2}\sum_{i=0}^{4}(\alpha_i+\beta_i))((\mathbf{D}-\frac{1}{2}\sum_{i=0}^{4}(\alpha_i+\beta_i))\mathbf{r}^{k-1}+\frac{1}{2}\sum_{i=0}^{4}(\alpha_i+\beta_i) \mathbf{r}^{k-1}\mathbf{T}_{(i+3)g}) \\
            &+\frac{1}{2}\sum_{i=0}^{4}(\alpha_i+\beta_i) ((\mathbf{D}-\frac{1}{2}\sum_{n=0}^{4}(\alpha_n+\beta_n))\mathbf{r}^{k-1}+\frac{1}{2}\sum_{n=0}^{4}(\alpha_n+\beta_n) \mathbf{r}^{k-1}\mathbf{T}_{(n+3)g}) \mathbf{T}_{(i+3)g} \\
            & = \cdots \\
            & = \sum_{i=0}^{k+1} { k+1\choose i}(\mathbf{D}-\frac{1}{2}\sum_{n=0}^{4}(\alpha_n+\beta_n))^i(\frac{1}{2}\sum_{n=0}^{4}(\alpha_n+\beta_n))^{k+1-i}\mathbf{r}^0\mathbf{T}_{(i+3)g}^{k+1-i}
        \end{aligned}
    \end{align}
\end{small}
\rev{}{Supposing $\{ \lambda_k(\alpha_i) \}(k=0,1,\cdots,n-1)$ and $\{ \lambda_l(\beta_j) \}(l=0,1,\cdots,n-1)$ are the eigenvalues of $ \alpha_i$ and $ \beta_j$ sorted in non-decreasing order. n is the number of control points of the approximation surface in the u and v directions. Since $\mid \lambda_k(\alpha_i) \mid=0\quad or \quad 1,k=0,1,\cdots,n-1$, $\mid \lambda_l(\beta_j) \mid=0\quad or \quad 1,l=0,1,\cdots,n-1$, and $ r(\sum_{i=0}^{4}\beta_i)<n$, their powers have no effect on convergence. $r(\sum_{i=0}^{4}\beta_i)$ is the rank of $ (\sum_{i=0}^{4}\beta_i)$. Let}
\begin{small}
    \begin{align}
        \mathbf{s}^{k+1}=\sum_{i=0}^{k+1} { k+1\choose i}(\mathbf{D}-\frac{1}{2}\sum_{n=0}^{4}(\alpha_n+\beta_n))^i(\frac{1}{2}\sum_{n=0}^{4}(\alpha_n+\beta_n))^{k+1-i}\mathbf{r}^0
    \end{align}
\end{small}
\rev{}{To show that $\{ \mathbf{r}^{k}\}$ is convergent, we firstly show that $\{ \mathbf{s}^{k}\}$ is convergent. With the Binomial theorem, we get:}
\begin{small}
\begin{equation}
    \mathbf{s}^{k+1}= \mathbf{D}^{k+1} \mathbf{r}^0=(\mathbf{I}-\mathbf{Bw})^{k+1} \mathbf{r}^0
\end{equation}
\end{small}
\rev{}{From Theorem 2.2 in ~\cite{2005_Lin_pia_for_NURBS}, we know that $\rho (\mathbf{I}-\mathbf{B})<1$, where $\rho (\cdot)$ gets a matrix's spectral radius. With uniform weight assignment, $\rho (\mathbf{I}-\mathbf{Bw})<1$. Therefore, we have the following equation:}
\begin{small}
\begin{align}
\lim_{k \to \infty} (\mathbf{I}-\mathbf{Bw})^k= \lim_{k \to \infty} \mathbf{s}^{k} =(\mathbf{0})_{n+1}
\end{align}
\end{small}
\rev{}{Let}
\begin{small}
\begin{align}
\begin{aligned}
    \mathbf{r}^{k+1}&= \sum_{i=0}^{k+1}\mathbf{r}_i=\sum_{i=0}^{k+1} { k+1\choose i}(\mathbf{D}-\frac{1}{2}\sum_{n=0}^{4}(\alpha_n+\beta_n))^i(\frac{1}{2}\sum_{n=0}^{4}(\alpha_n+\beta_n))^{k+1-i}\mathbf{r}^0\mathbf{T}_{(i+3)g}^{k+1-i} \\
    \mathbf{s}^{k+1}&=\sum_{i=0}^{k+1}\mathbf{s}_i=\sum_{i=0}^{k+1} { k+1\choose i}(\mathbf{D}-\frac{1}{2}\sum_{n=0}^{4}(\alpha_n+\beta_n))^i(\frac{1}{2}\sum_{n=0}^{4}(\alpha_n+\beta_n))^{k+1-i}\mathbf{r}^0 \\
\end{aligned}
\end{align}
\end{small}
\rev{}{For each term $\mathbf{r}_i$ in $\mathbf{r}^{k+1}$ and $\mathbf{s}_i$ in $s^{k+1}$, $ || \mathbf{r}_i ||_2= ||\mathbf{s}_i ||_2 $. $ || \cdot||_2$ gets 2-norm of a matrix. Then $|| \mathbf{r}^{k+1} ||_2 = || \mathbf{s}^{k+1} ||_2 $ since $\mathbf{T}_{ig} (i=4,5,6,7)$ are rigid transformation matrices with modulo lengths $1$ for all the eigenvalues. Because $||\mathbf{r}^{k+1}||_2=\sqrt[]{\lambda _{max}((\mathbf{r}^{k+1})^T \mathbf{r}^{k+1})} $ and $||\mathbf{s}^{k+1}||_2=\sqrt[]{\lambda _{max}((\mathbf{s}^{k+1})^T \mathbf{s}^{k+1})} $, the spectral radius of $\mathbf{r}^{k+1}$ and $\mathbf{s}^{k+1}$ are the same. So if $\{ \mathbf{s}^{k}\}$ is convergent, then $\{ \mathbf{r}^{k}\}$ is also convergent, which is equivalent to:}
\begin{small}
\begin{align}
\lim_{k \to \infty} \mathbf{P}^k-\mathbf{w}^{-1}\mathbf{B}^{-1}\mathbf{Q}=\mathbf{0}
\end{align}
\end{small}
So \{ $\mathbf{P}^k$ \} is convergent, and
\begin{small}
\begin{align}
\mathbf{P}^ \infty=\mathbf{w}^{-1}\mathbf{B}^{-1}\mathbf{Q}
\end{align}
\end{small}
which concludes the proof. 
\end{proof}

\subsection{STEP file generation}
\label{sec:assembling}
As is shown in Fig.~\ref{step_format}, a STEP file is divided into two sections. The first is the head and tail of the file, which contains the meta-data like the author, the file name, and the description of the file. The second contains all the data of the file, including geometry information (e.g., ``CARTESIAN\_POINT"), topological information (e.g., ``CLOSED\_SHELL"), and other parameters such as the tolerance of the model (e.g., ``UNCERTAINTY\_MEASURE\_WITH\_UNIT"). In this region, Express language is used. Express is a formal language designed for storing and transmitting information about engineering data and processes in computer systems. In this paper, the TPMS solid model data is stored in STEP files with this language.

\begin{figure}[h]
  \centering
  \includegraphics[width=0.45\textwidth]{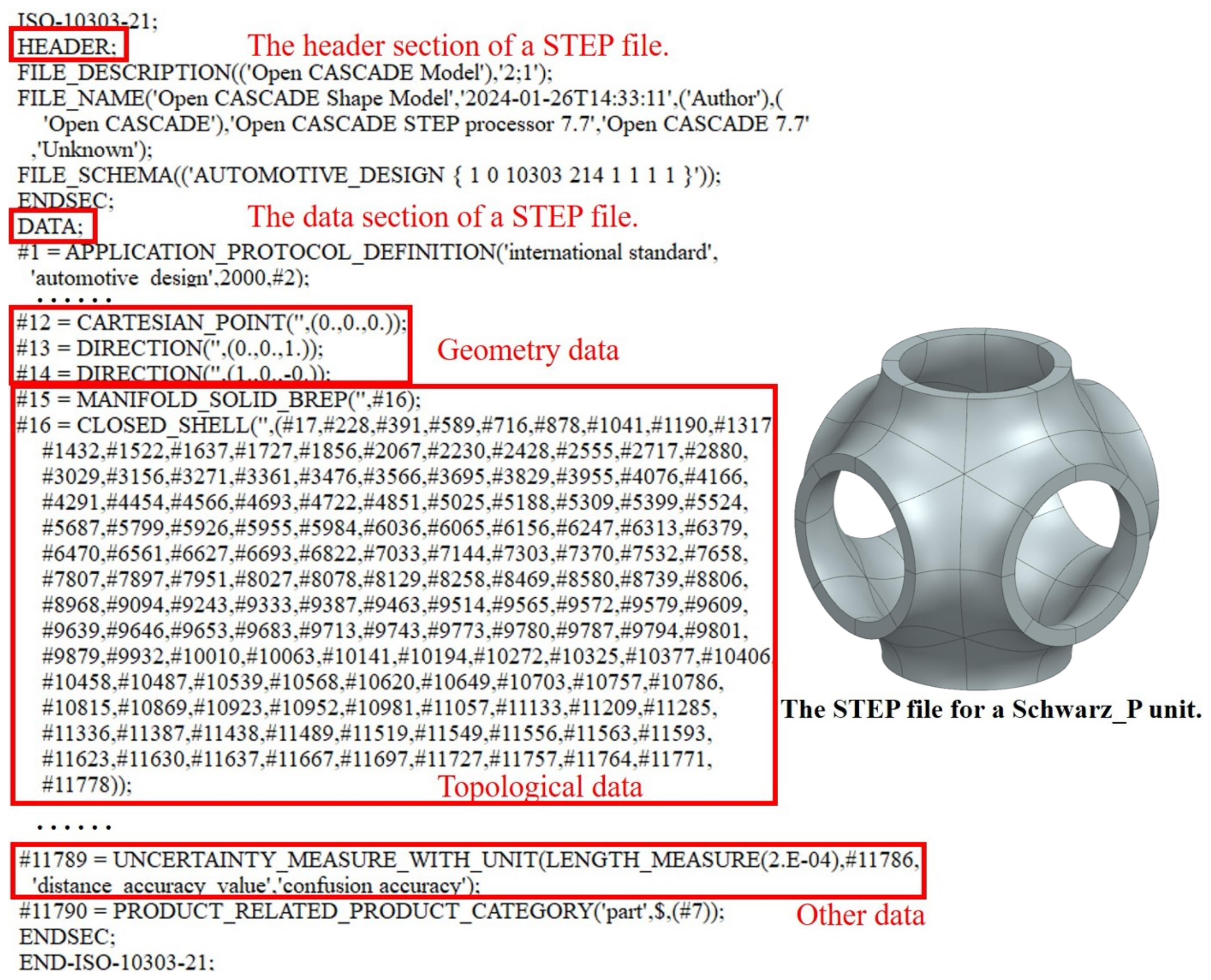}
  \caption{An example of the structure of a STEP file. The structure of a STEP file includes two regions: the head and tail region and the data region. The geometry data, topological data, and other data in data regions are framed in the graph.}
  \label{step_format}
\end{figure}
\begin{figure}[t]
  \centering
  \includegraphics[width=0.35\textwidth]{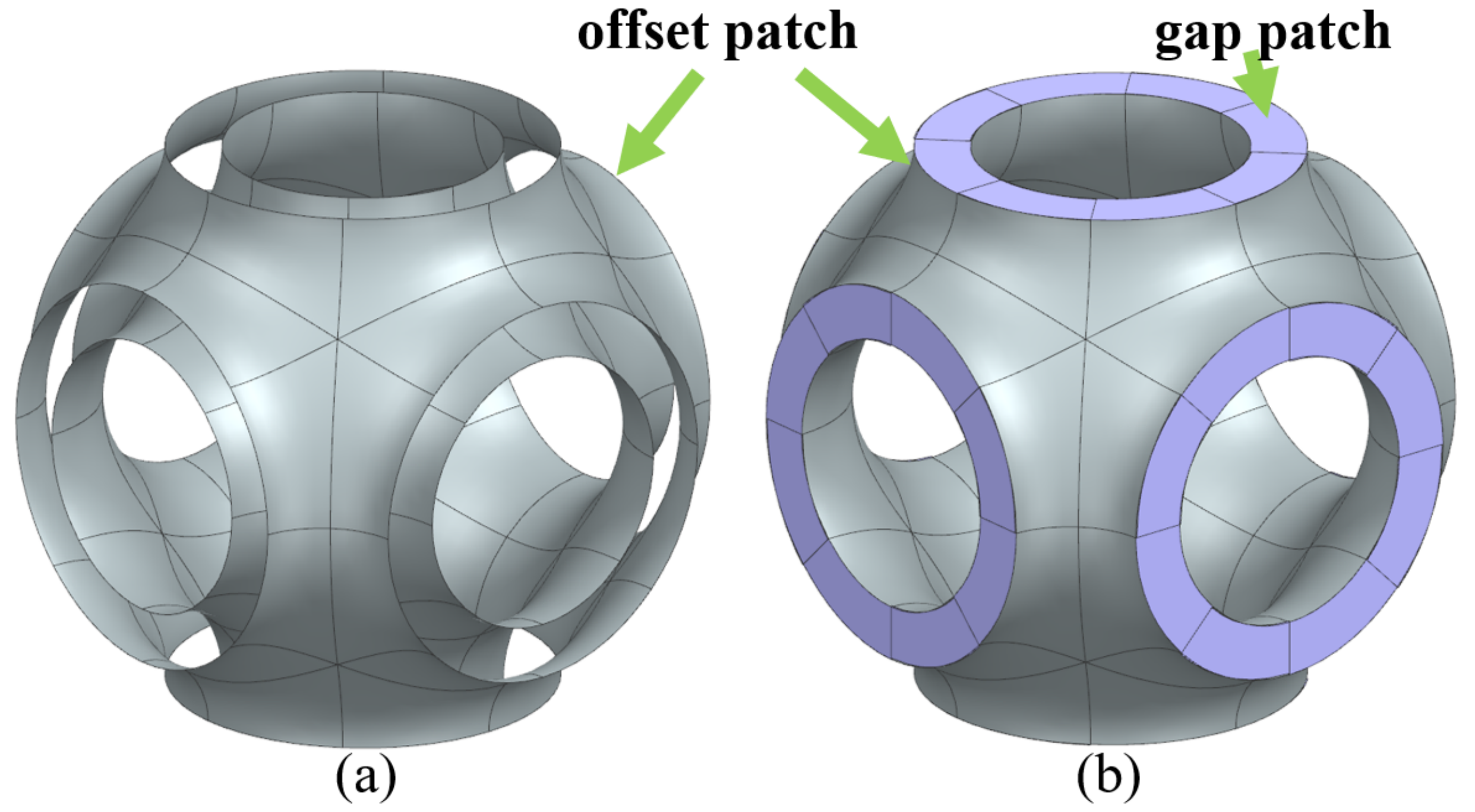}
  \caption{Illustration of gap patches and offset patches: (a) Copied and transformed offset patches without generating gap patches; and (b) Offset patches and corresponding gap patches.}
  \label{gappatch}
\end{figure}

The STEP file generation process used in this work is a standard one and consists of three stages. First, meta-data is written. Second, the NURBS patches are copied, transformed, and assembled into a solid model. Third, the data of the solid model is translated into the Express language and saved in STEP files.

The above three stages are quite straightforward, except for one special task before assembly in the second stage. That is, gap patches need to be crafted to bridge the space between a pair of offset surfaces, as shown in Fig.~\ref{gappatch}. To fill the gaps, we first identify pairs of opposite edges on offset surfaces and then generate surfaces connecting those edge pairs by using the bridging operation offered by existing CAD software.

\begin{figure*}[t]
  \centering
  \includegraphics[width=0.9\textwidth]{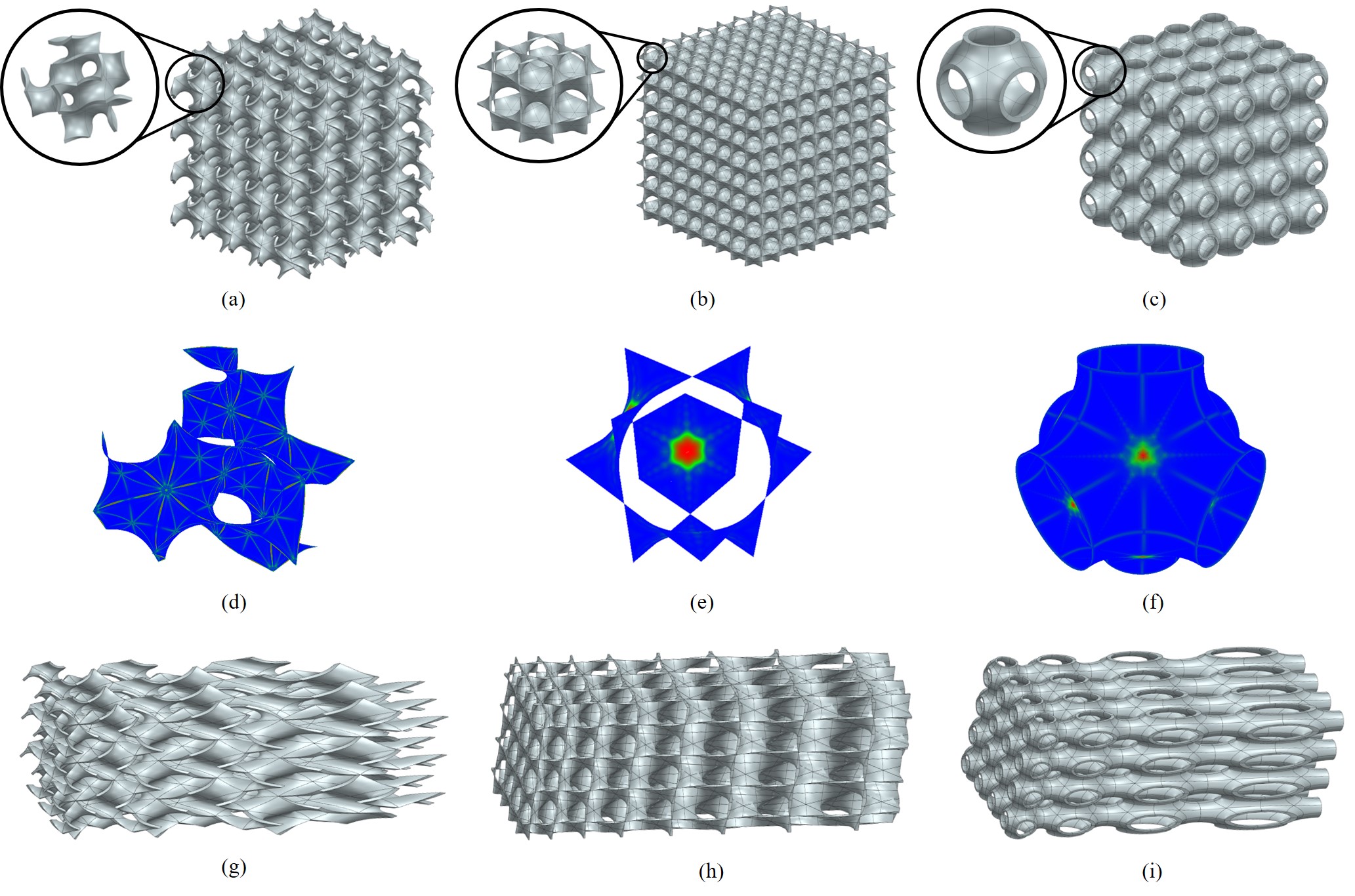}
  \caption{Gyroid (left), Diamond (middle), and Schwarz\_P (right) models and analysis results for case study (1)-(6). (a)-(c) unit models and assembled models; (d)-(f) variation of the second-order partial derivative; (g)-(i) scaled models.}
  \label{result_tpms_structure}
\end{figure*}

\begin{figure}
  \centering
  \includegraphics[width=0.45\textwidth]{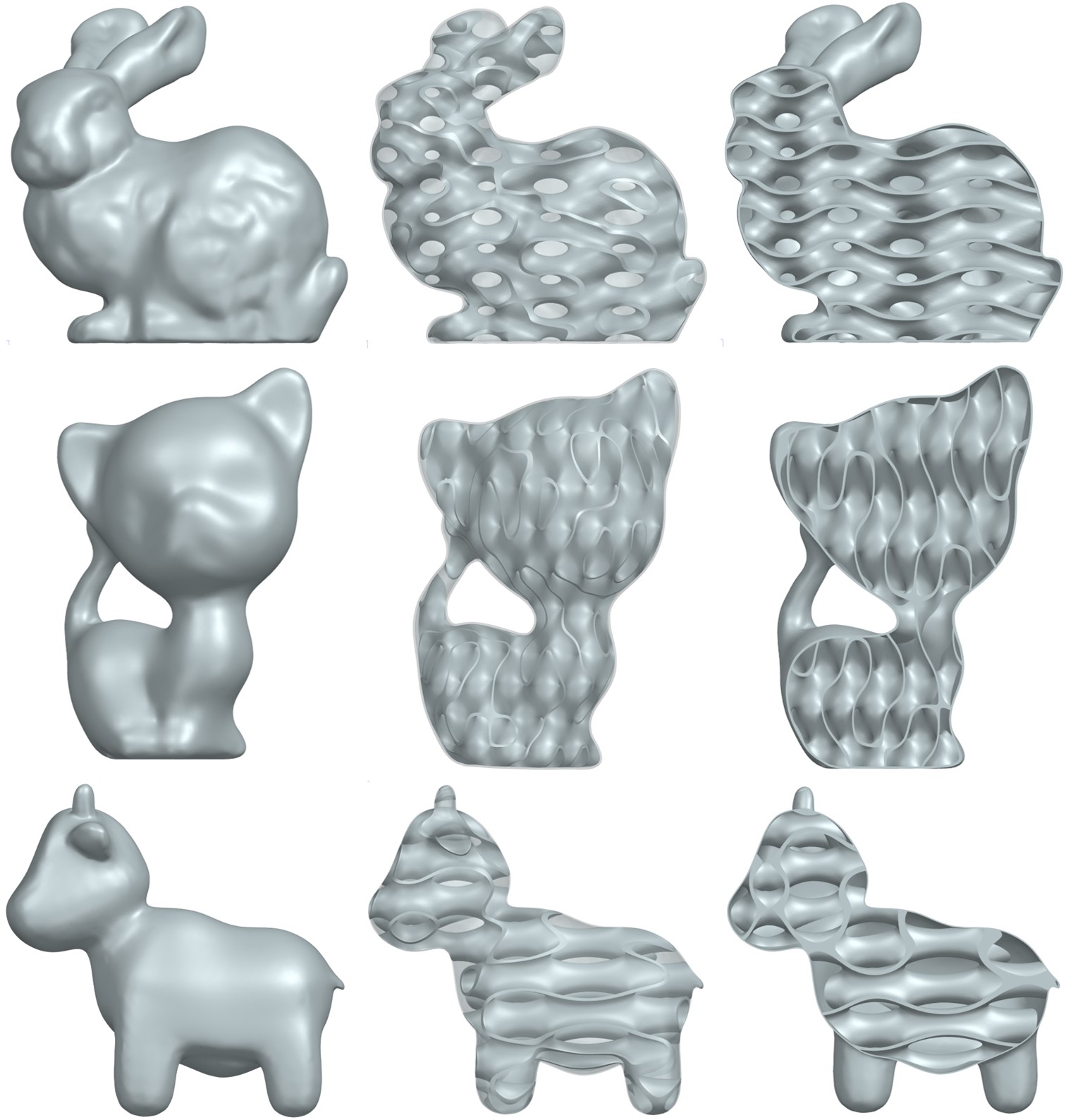}
  \caption{Solid models (left), transparent models (middle), and the section of the models (right) for case study (7)-(9): the top row for models designed with Gyroid;  the middle row for models designed with Diamond; and the bottom row for models designed with Schwarz\_P.}
  \label{result_tpms_structure_intersection}
\end{figure}
\begin{figure*}
  \centering
  \includegraphics[width=0.9\textwidth]{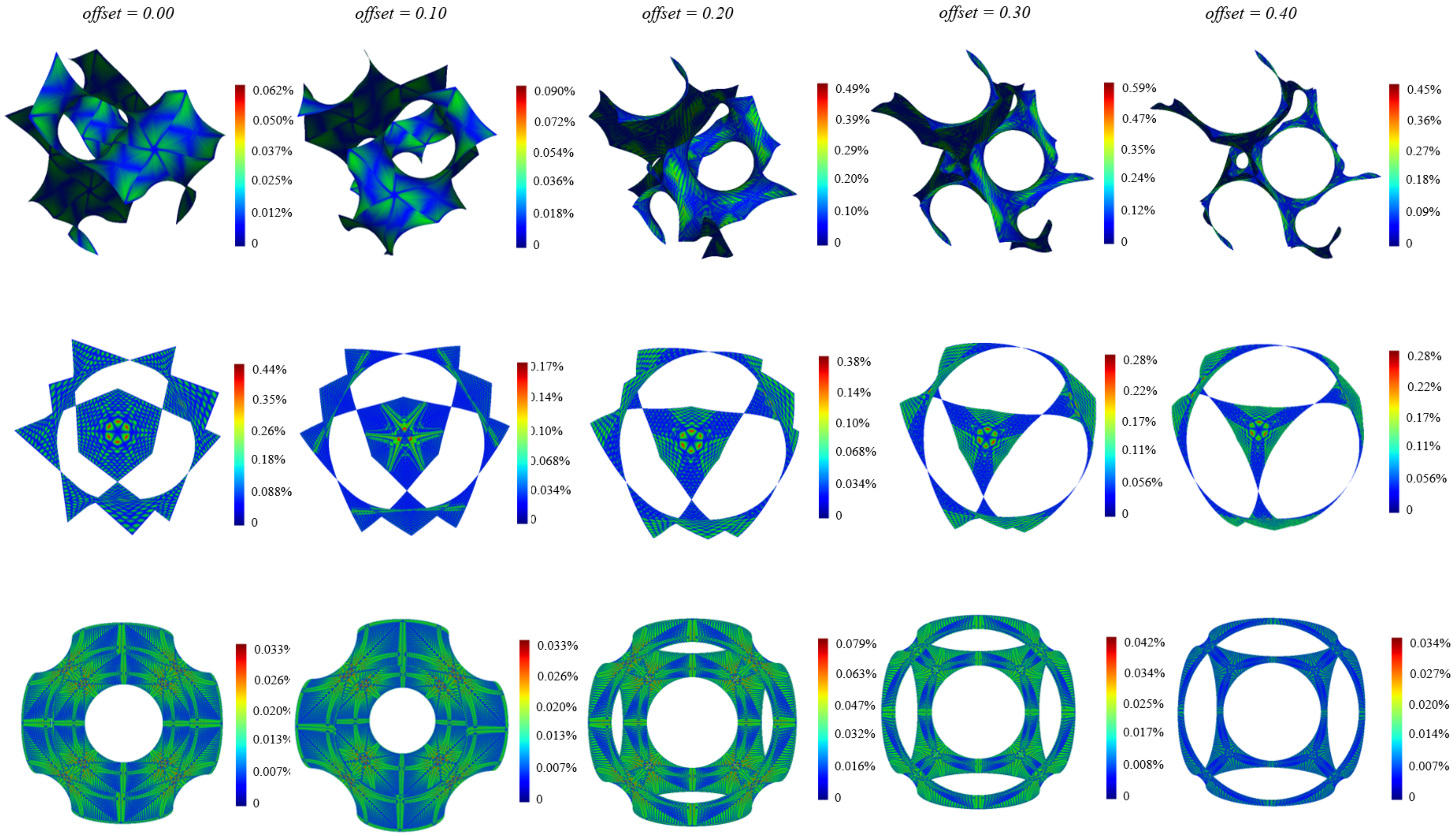}
  \caption{Error distribution in the NURBS surfaces with five different offset values for error analysis: the top row for Gyroid; the middle row for Diamond; and the bottom for Schwarz\_P.}
  \label{result_error_analysis_1}
\end{figure*}

With gap patches in place, they are copied and arranged alongside corresponding offset patches according to the intrinsic symmetry of TPMS. \rev{}{In this process, if we first create a unit (with gaps filled already) and then duplicate it alongside the given directions, overlapped gap patches will cause incorrect model topology. To avoid this problem, we opted to not fill the gaps immediately after creating a unit, but to keep this unit open. Then the open unit is duplicated alongside the given directions, resulting in two shells. The gap between these two shells is then filled. This way of working can avoid overlapping gap patches inside the solid model, and thus there is no need for deleting gap patches.} 

Then all the patches (including offset patches and gap patches) are assembled to form a TPMS unit (shown in Fig.~\ref{gappatch}).
Due to the spatially periodic nature of TPMS, the assignment of spatial periods enables precise transformation and assembly of the patches into a complete solid model. 
Furthermore, because most surface patches share the same geometry and they only differ in their positions and orientations, we only need to store geometry once and all topological data can be concisely stored by referencing this geometry, accompanied by their rigid transformation matrices. This trick can significantly reduce the STEP file for a TPMS model, especially for those having many cells.

\section{Results and discussions}
\label{sec:results}
The proposed method has been implemented using C++ on a computer with an Intel Core i9-12900K CPU and 128GB RAM. 
Based on this implementation, nine case studies are to be presented to demonstrate the effectiveness of the proposed method. Case studies 1-3 (Fig.~\ref{result_tpms_structure}a-c) considered three simple uniform arrangements of typical TPMS structures, including Gyroid, Diamond, and Schwarz\_P. Case studies 4-6 (Fig.~\ref{result_tpms_structure}g-i) analyzed a more complex situation where TPMS structures were scaled to have distorted shapes, which increased the difficulty of surface approximation. Case studies 7-9 (Fig.~\ref{result_tpms_structure_intersection}) added even more complexity with trimmed TPMS structures. Error analysis (Figs.~\ref{result_error_analysis_1}-\ref{result_error_analysis_4} \rev{}{and Table~\ref{tab:error_tolerance_analysis}}) of the translated models in those case studies are given. 
\rev{Efficiency analysis (Figs.~\ref{result_time_tolerance} and~\ref{case_error_cpia})}{Continuity analysis (Table~\ref{tab:continuity_control}) and efficiency analysis (Figs.~\ref{result_time_tolerance} and~\ref{case_error_cpia})} are also provided. To test the validity of .step files after translation, Siemens NX (version 2306) has been used to open those files and then carry out model edits (e.g., boolean operations), and all case studies passed the test.

\begin{figure}[t]
  \centering
  \includegraphics[width=0.43\textwidth]{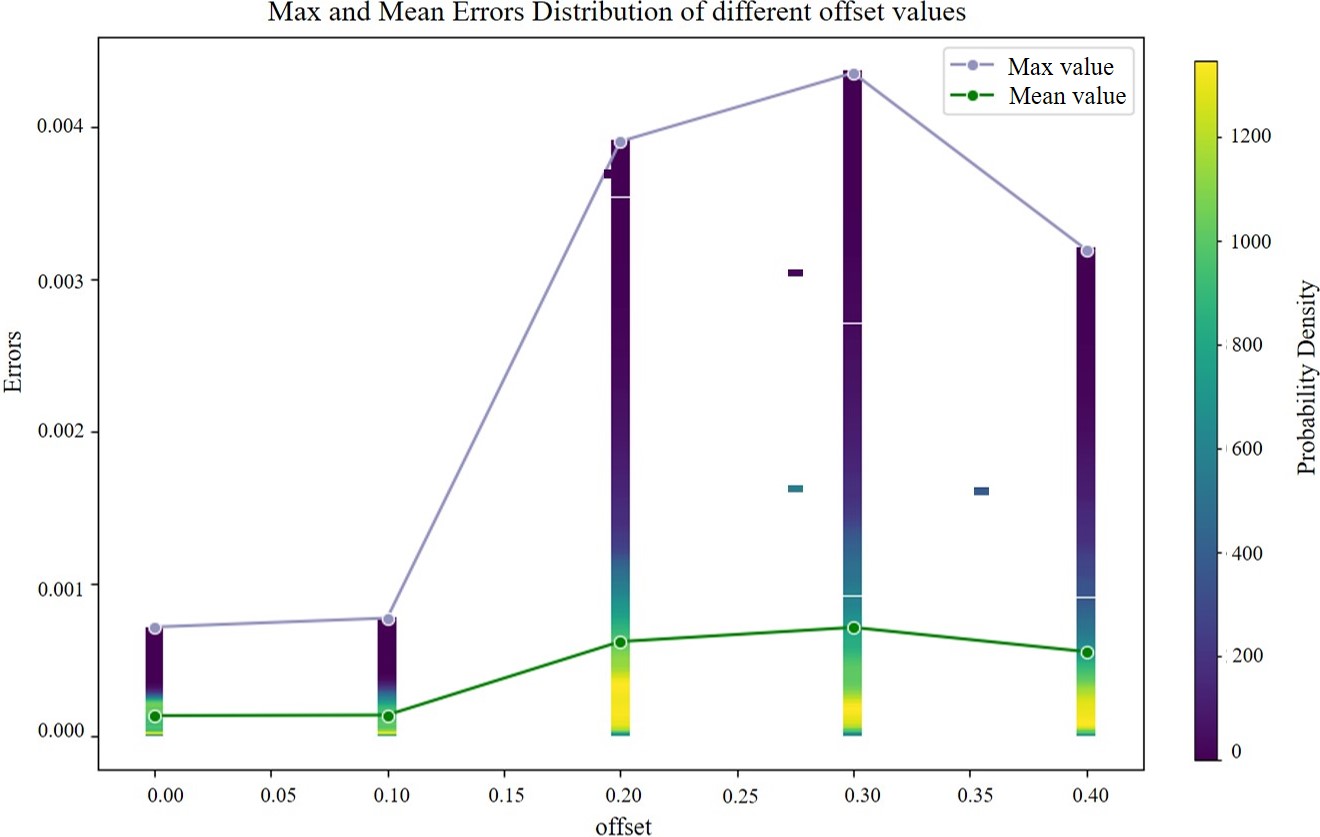}
  \caption{The statistical results of the error range distribution of Gyroid for error analysis.}
  \label{result_error_analysis_2}
\end{figure}
\begin{figure}[ht]
  \centering
  \includegraphics[width=0.43\textwidth]{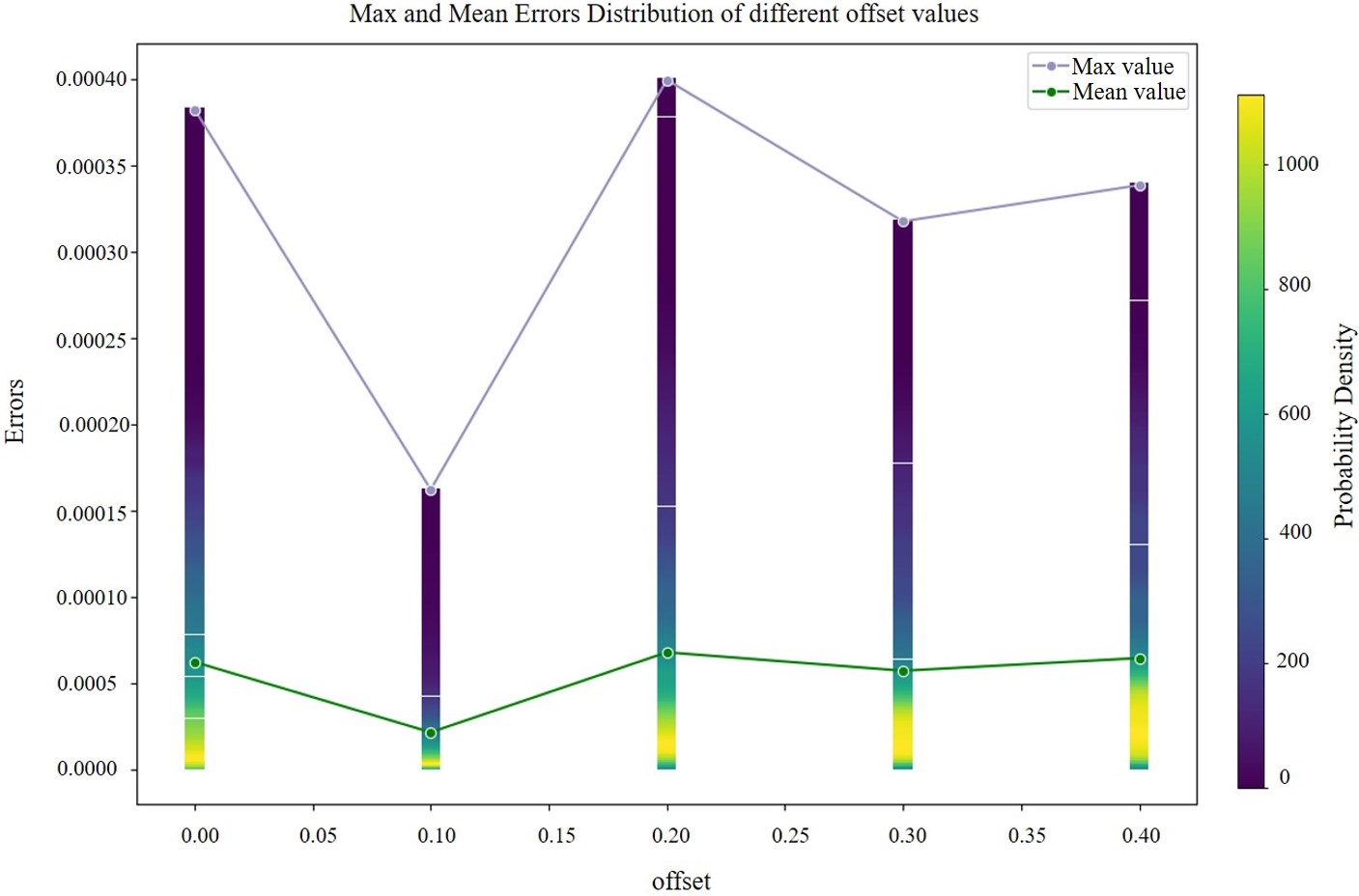}
  \caption{The statistical results of the error range distribution of Diamond for error analysis.}
  \label{result_error_analysis_3}
\end{figure}
\begin{figure}[t]
  \centering
  \includegraphics[width=0.45\textwidth]{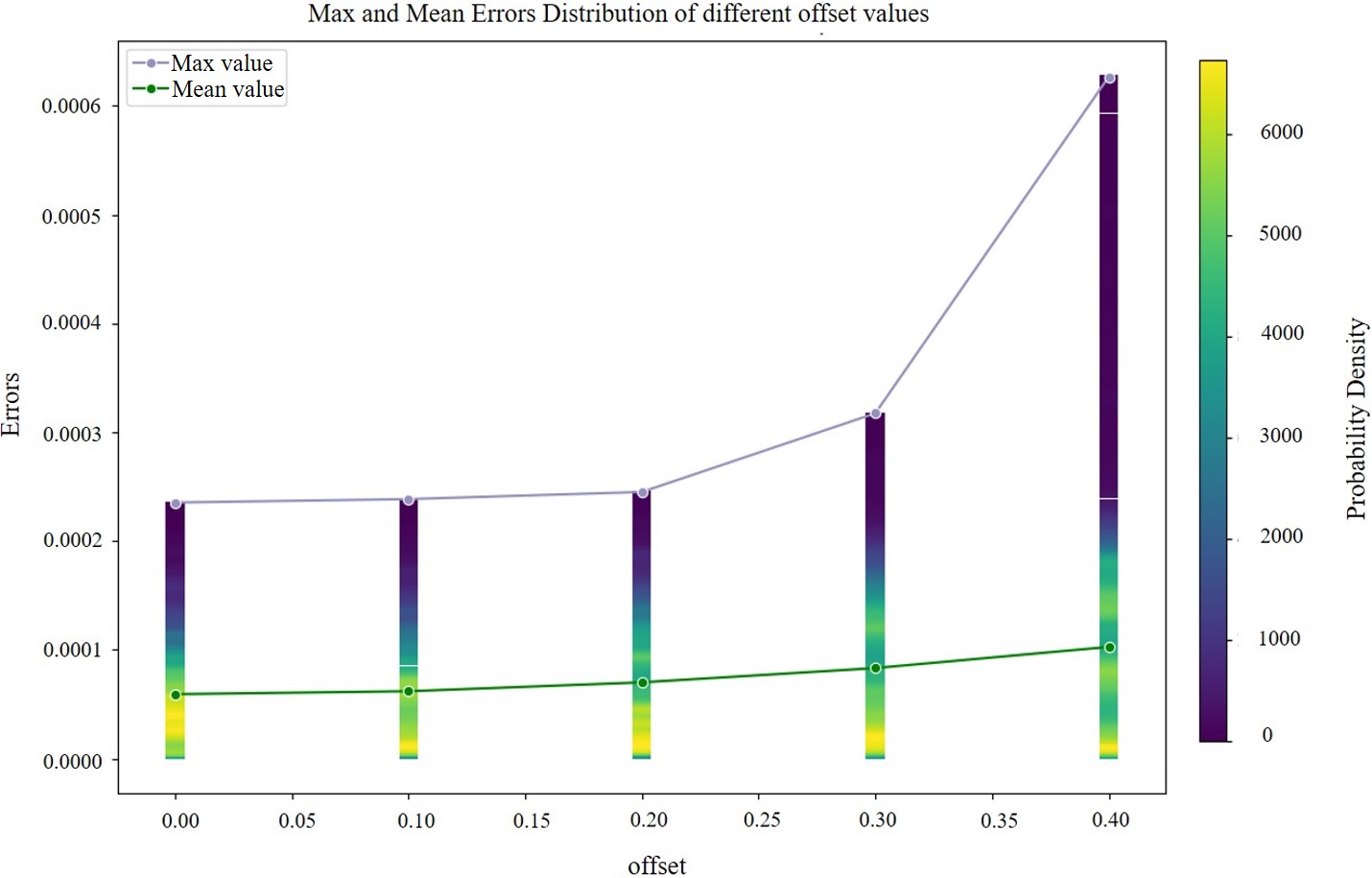}
  \caption{The statistical results of the error range distribution of Schwarz\_P for error analysis.}
  \label{result_error_analysis_4}
\end{figure}
\begin{figure}[t]
  \centering
  \includegraphics[width=0.45\textwidth]{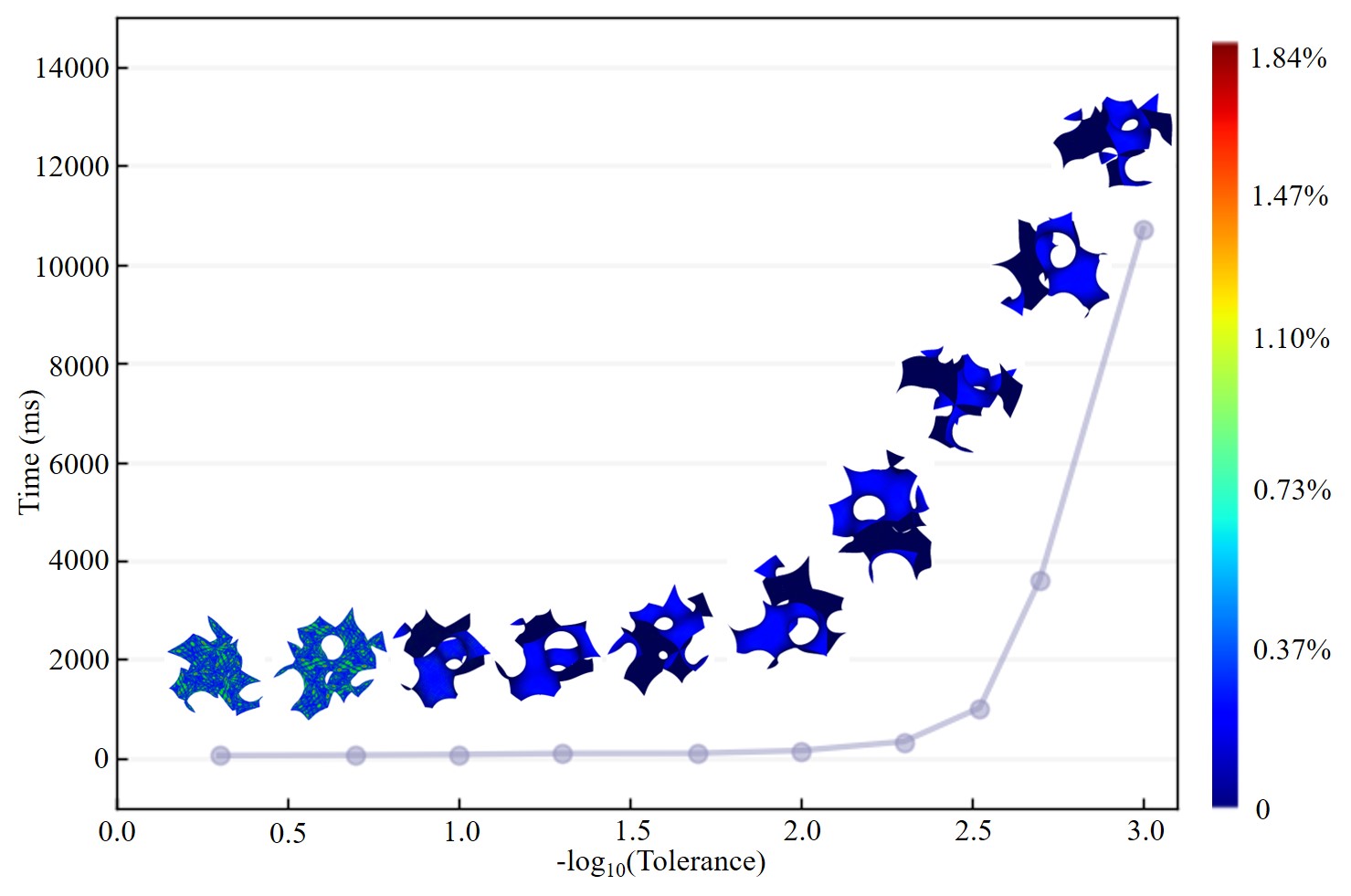}
  \caption{Time used in constructing NURBS surfaces under different tolerances for efficiency analysis.}
  \label{result_time_tolerance}
\end{figure}
\begin{figure}[t]
  \centering
  \includegraphics[width=0.45\textwidth]{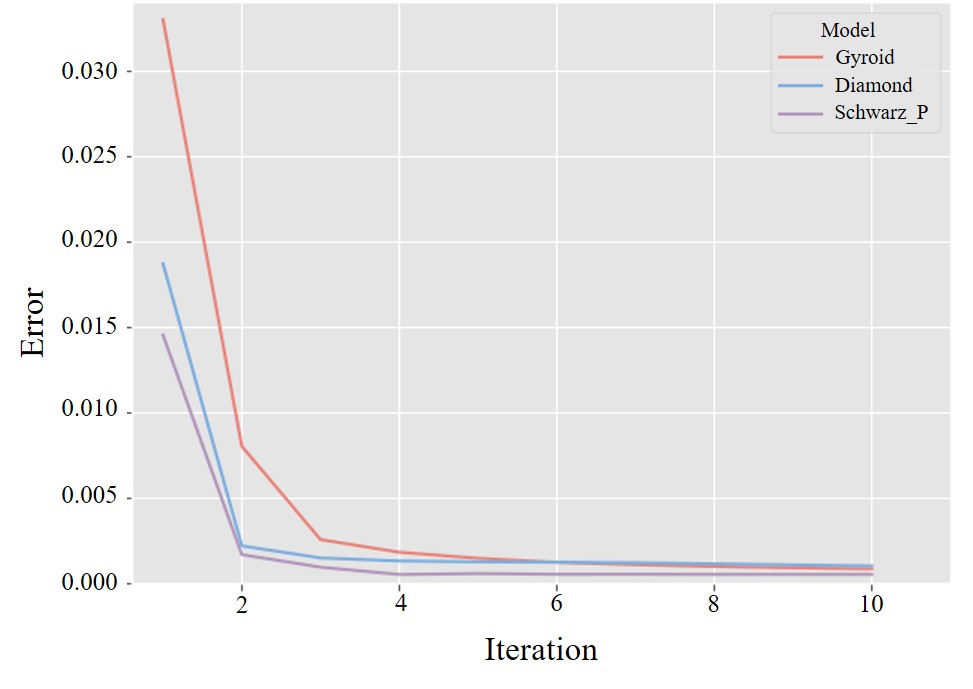}
  \caption{Error reduction in CPIA algorithm for efficiency analysis.}
  \label{case_error_cpia}
\end{figure}

\subsection{Examples}
\label{sec:examples}
Case studies 1-3 (Fig.~\ref{result_tpms_structure}~a-c) considered a TPMS unit model and an assembled model of Gyroid, Diamond, and Schwarz\_P TPMS respectively. Case studies 4-6 (Fig.~\ref{result_tpms_structure}~g-i) demonstrate the scaled TPMS solid models with distorted shapes. Case study 7-9 (Fig.~\ref{result_tpms_structure_intersection}) demonstrates the trimmed TPMS structures. \rev{}{Figs.~\ref{result_tpms_structure} and~\ref{result_tpms_structure_intersection} show the resulting surfaces/solids, and Table~\ref{tab:number_of_control_points} summarizes the numbers of control points of those approximate surfaces.}

Error analysis (Figs.~\ref{result_error_analysis_1}-\ref{result_error_analysis_4} \rev{}{and Table~\ref{tab:error_tolerance_analysis}}) considered the error distribution in translated models in the case studies. \rev{to prove the effectiveness of the error control process.
It also demonstrates the statistical results of the error range distribution.}{It also gives a comparison between the maximum error of the solid model and the given tolerance. From the statistics, the maximum errors are all below the given tolerances, which validates the error-controlling feature of our method.}

\rev{}{\begin{table}
    \centering
    \footnotesize
    \setlength\extrarowheight{2pt}
    \setlength{\abovecaptionskip}{0cm}
    \caption{The offset, tolerance, and corresponding maximum errors between the TPMS solid model and the original TPMS model for error analysis.}
    \setlength{\tabcolsep}{4mm} {
    \begin{tabular}{l l l l l}
    \hline
         \multirow{2}{*}{Offset} & \multirow{2}{*}{Tolerance} & \multicolumn{3}{c}{Maximum Errors} \\
         \cline{3-5}
                            &    & Gyroid & Diamond & Schwarz\_P \\  \hline
         0 & 0.005 &  0.00024 & 0.0018 &  0.00022 \\
         0.1 & 0.01 & 0.00077  & 0.0016 & 0.00041 \\
         0.2 & 0.01 & 0.0039 & 0.0040 & 0.00025 \\
         0.3 & 0.01 & 0.0044  & 0.0032 & 0.00032 \\
         0.4 & 0.01 & 0.0032  & 0.0034 & 0.00063 \\
         \hline
    \end{tabular} }
    \label{tab:error_tolerance_analysis}
\end{table}}

\rev{}{\begin{table}
    \centering
    \setlength\extrarowheight{3pt}
    \setlength{\abovecaptionskip}{0cm}
    \footnotesize
    \caption{The relative maximum and average differences of the first-order and second-order derivatives between two connected edges of two adjacent approximation surfaces for continuity analysis.}
    \setlength{\tabcolsep}{1mm} {
    \begin{tabular}{l l l l l l l l}
    \hline
         \multirow{2}{*}{Model Type} & \multirow{2}{*}{Offset} & \multicolumn{2}{c}{Maximum Differences} & \multicolumn{2}{c}{Average Differences} \\
         \cline{3-6}
                    & &  First Order & Second Order & First Order & Second Order \\
        \hline
        Gyroid & 0 & 0.08\% & 0.25\% & 0.04\% & 0.18\% \\
        Gyroid & 0.3 & 0.12\% & 0.60\% & 0.06\% & 0.40\% \\
        Diamond & 0 & 0.06\%  & 0.15\% & 0.09\% & 0.13\% \\
        Diamond & 0.3 & 0.07\% & 0.15\% & 0.09\% & 0.13\% \\
        Schwarz\_P & 0 & 0.08\% & 0.82\% & 0.04\% & 0.42\% \\
        Schwarz\_P & 0.3 & 0.09\% & 0.80\% & 0.04\% & 0.44\% \\
    \hline
    \end{tabular} }
    \label{tab:continuity_control}
\end{table}}

\rev{}{\begin{table*}
    \centering
    \setlength\extrarowheight{3pt}
    \setlength{\abovecaptionskip}{0cm}
    \footnotesize
    \caption{The numbers of control points in approximation surfaces of case studies (1)-(9).}
    \setlength{\tabcolsep}{3mm} {
    \begin{tabular}{l l c c}
    \Xhline{1.2pt}
        Model Type & Offset & Number of Sample Points & Number of Control Points \\
        \Xhline{1.2pt}
            \multirow{5}{*}{Gyroid} & 0 & 24025 & 24025 \\
             & 0.1 & 10609 & 10609 \\
             & 0.2 & 10609  & 10609 \\
             & 0.3 & 10609  & 10609 \\
             & 0.4 & 10609  & 10609 \\
             \hline
             \multirow{5}{*}{Diamond} & 0 & 2601 & 2601 \\
             & 0.1 & 2601 & 2601 \\
             & 0.2 & 2601  & 2601 \\
             & 0.3 & 2601  & 2601 \\
             & 0.4 & 2601  & 2601 \\
             \hline
             \multirow{5}{*}{Schwarz\_P} & 0 & 2601 & 2601 \\
             & 0.1 & 2601 & 2601 \\
             & 0.2 & 2601  & 2601 \\
             & 0.3 & 2601  & 2601 \\
             & 0.4 & 2601  & 2601 \\
         \Xhline{1.2pt}
    \end{tabular} }
    \label{tab:number_of_control_points}
\end{table*}}

\rev{}{Continuity analysis (Table~\ref{tab:continuity_control}) analyzed the continuity between adjacent approximation surfaces in translated models in the case studies. The relative maximum and average differences of the first-order and second-order derivatives between two connected edges of two adjacent surfaces are calculated to validate the $C^2$ continuity-preserving feature.}

Efficiency analysis (Figs.~\ref{result_time_tolerance} and ~\ref{case_error_cpia}) demonstrated the construction time of error-controlled NURBS surfaces with different tolerances to prove the efficiency of the proposed method and showed the error reduction process of CPIA to verify the convergence of the CPIA.
 

\subsection{Discussion and Limitations}
\label{sec:discussion}
In all of the above TPMS solid models, the NURBS surfaces are constructed with an error less than the specified tolerance, confirming the effectiveness of the error-driven TPMS sampling algorithm. The high-accuracy NURBS surface with an error upper bound of 1mm could be constructed within 10.7s.
From Fig.~\ref{result_error_analysis_1}, the error is mainly found in specific parts of the surface, which may be influenced by the sampling strategy. From Fig.~\ref{result_error_analysis_2}, Fig.~\ref{result_error_analysis_3}, and Fig.~\ref{result_error_analysis_4}, the proportion of these regions with large errors is quite low.
\rev{In Fig.~\ref{result_tpms_structure}, the variation of the second-order partial derivative shows the $ C^2$ continuity maintenance of our method.}{}
In Fig.~\ref{case_error_cpia}, the reduced error during CPIA shows the convergence of CPIA from an experimental perspective and further confirms the effectiveness of the convergence proof in Sec.~\ref{sec:proof}.

\rev{}{From the error stats in Table~\ref{tab:error_tolerance_analysis}, we can find that the error bound $2\epsilon$ derived in Section 3.2 is somewhat conservative. For example, the empirical maximum error is 0.0044 in the case with $offset=0.3$ for Gyroid, but the theoretical tolerance is 0.01. For Schwarz\_P with $offset=0.3$, the maximum error is only $3.2\%$ of the tolerance, which implies that a tighter bound may be derived in future work.}

\rev{}{In Fig.~\ref{result_tpms_structure}d-f, the variation of the second-order partial derivative demonstrates the continuous distribution of the second-order partial derivatives, and in Table~\ref{tab:continuity_control}, the relative differences of the first-order and second-order derivatives between the connected edges of two adjacent surfaces are no more than 0.12\% and 0.82\%, respectively. They both show the $ C^2$ continuity feature of our method.}

In this paper, although only the three most commonly used TPMS types are considered, this method could also be applied to part of other types of TPMS structures like I-WP or Schwarz\_S. They have different Weierstrass functions from the three types of TPMS structures utilized in this paper~\cite{Fogden1992}. However, the Weierstrass functions cannot cover all TPMS \rev{model}{models} (e.g., the $\mathbf{C}(\mathbf{P})$ surface)~\cite{2000_Gandy_TPMS_explicit_representation_Weierstrass_Gyroid}. This method cannot be applied to TPMS structures with unknown Weierstrass functions. This is a serious limitation of the proposed method.

\section{Conclusion}
\label{sec:conclusion}
A new method has been presented in this paper to translate TPMS to STEP. The main features of this method include error control and continuity preservation in the translation. These main features are essentially achieved through two new algorithms: an approximation error-driven TPMS sampling algorithm and a CPIA algorithm. When working together, they can give an error-controlled, $C^2$ continuity-preserving, and fast translation of TPMS models to STEP files. Both the theoretical proof of the proposed method's convergence and the empirical validation of the proposed method's effectiveness have been demonstrated by using a series of examples and comparisons.

Although the proposed method is seen to be quite effective in the case studies and analyses conducted, there are still a few limitations. Certain types of TPMS cannot be represented by Weierstrass functions (e.g., the $\mathbf{C}(\mathbf{P})$ surface)~\cite{2000_Gandy_TPMS_explicit_representation_Weierstrass_Gyroid}, thus falling outside the applicability scope of the proposed method. Another limitation is that the CPIA algorithm requires the same number of control points as the sample points, leading to a large number of control points when a high TPMS translation accuracy is required. Extending the present work to overcome this limitation can be very practically beneficial, and are among the CAD research studies to be carried out in the future. The LSPIA algorithm~\cite{2018_lin_survey_pia} may help here, but further development is needed to make its approximation error controllable. \rev{}{Surface approximation with parameter correction, e.g., those presented in~\cite{2012_kineri_surface_fitting_iterative,2024_bo_iterative_approximation_error_control_parameter_correction}, can be helpful in attaining lower approximation errors for PIA and is among our future research studies.}

\section*{Acknowledgments}
This work has been funded by NSF of China (No. 62102355), the ``Pioneer" and ``Leading Goose" R\&D Program of Zhejiang Province (No. 2024C01103), NSF of Zhejiang Province (No. LQ22F020012), and the Fundamental Research Funds for the Zhejiang Provincial Universities (No. 2023QZJH32).




\bibliographystyle{elsarticle-num}
\bibliography{mybibfile}

\begin{thebibliography}{10}
\expandafter\ifx\csname url\endcsname\relax
  \def\url#1{\texttt{#1}}\fi
\expandafter\ifx\csname urlprefix\endcsname\relax\def\urlprefix{URL }\fi
\expandafter\ifx\csname href\endcsname\relax
  \def\href#1#2{#2} \def\path#1{#1}\fi

\bibitem{2019_fang_tpms_definition}
J.~Feng, J.~Fu, Z.~Lin, C.~Shang, X.~Niu, Layered infill area generation from triply periodic minimal surfaces for additive manufacturing, Computer-Aided Design 107 (2019) 50--63.

\bibitem{2022_wang_tpms_property_porous}
S.~Wang, Y.~Jiang, J.~Hu, X.~Fan, Z.~Luo, Y.~Liu, L.~Liu, Efficient representation and optimization of {TPMS}-based porous structures for 3{D} heat dissipation, Computer-Aided Design 142 (2022) 103123.

\bibitem{2021_ding_tpms_slicing}
J.~Ding, Q.~Zou, S.~Qu, P.~Bartolo, X.~Song, C.~C. Wang, Stl-free design and manufacturing paradigm for high-precision powder bed fusion, CIRP Annals 70 (2021) 167--170.

\bibitem{2023_careri_tpms_application_aerospace}
F.~Careri, R.~H. Khan, C.~Todd, M.~M. Attallah, Additive manufacturing of heat exchangers in aerospace applications: a review, Applied Thermal Engineering 235 (2023) 121387.

\bibitem{2014_yang_tissue_engineering}
N.~Yang, Z.~Quan, D.~Zhang, Y.~Tian, Multi-morphology transition hybridization {CAD} design of minimal surface porous structures for use in tissue engineering, Computer-Aided Design 56 (2014) 11--21.

\bibitem{2019_catchpole_tpms_application_energy_thermal_conductivity}
S.~Catchpole-Smith, R.~Sélo, A.~Davis, I.~Ashcroft, C.~Tuck, A.~Clare, Thermal conductivity of {TPMS} lattice structures manufactured via laser powder bed fusion, Additive Manufacturing 30 (2019) 100846.

\bibitem{liu2021memory}
S.~Liu, T.~Liu, Q.~Zou, W.~Wang, E.~L. Doubrovski, C.~C. Wang, Memory-efficient modeling and slicing of large-scale adaptive lattice structures, Journal of Computing and Information Science in Engineering 21~(6) (2021) 061003.

\bibitem{2023_hong_tpms_functional_representation}
Q.~Y. Hong, G.~Elber, M.-S. Kim, Implicit functionally graded conforming microstructures, Computer-Aided Design 162 (2023) 103548.

\bibitem{zou2023variational}
Q.~Zou, H.-Y. Feng, S.~Gao, Variational direct modeling: A framework towards integration of parametric modeling and direct modeling in {CAD}, Computer-Aided Design 157 (2023) 103465.

\bibitem{2022_distance-field-rep_TPMS}
P.~Liu, B.~Sun, J.~Liu, L.~Lu, Parametric shell lattice with tailored mechanical properties, Additive Manufacturing 60 (2022) 103258.

\bibitem{2023_neural-implicit-rep_TPMS}
Y.~Xu, H.~Pan, R.~Wang, Q.~Du, L.~Lu, New families of triply periodic minimal surface-like shell lattices, Additive Manufacturing 77 (2023) 103779.

\bibitem{2011_TPMS-to-mesh_visualization}
D.-J. Yoo, Computer-aided porous scaffold design for tissue engineering using triply periodic minimal surfaces, International Journal of Precision Engineering and Manufacturing 12~(1) (2011) 61--71.

\bibitem{2021_TPMS-to-mesh_analysis}
J.~Iamsamang, P.~Naiyanetr, Computational method and program for generating a porous scaffold based on implicit surfaces, Computer Methods and Programs in Biomedicine 205 (2021) 106088.

\bibitem{2021_Flores_TPMS_NURBS_generation_Gyroid}
M.~S. Flores-Jimenez, A.~Delgado-Guti{\'e}rrez, R.~Q. Fuentes-Aguilar, D.~Cardenas, Generation of a quadrilateral mesh based on {NURBS} for gyroids of variable thickness and porosity, Journal of Applied and Computational Mechanics 8~(2) (2022) 684--698.

\bibitem{zou2019push}
Q.~Zou, H.-Y. Feng, Push-pull direct modeling of solid {CAD} models, Advances in Engineering Software 127 (2019) 59--69.

\bibitem{zou2022robust}
Q.~Zou, H.-Y. Feng, A robust direct modeling method for quadric b-rep models based on geometry--topology inconsistency tracking, Engineering with Computers 38~(4) (2022) 3815--3830.

\bibitem{wang2023quasi}
H.~Wang, Q.~Zou, H.~Lin, A quasi-optimal shape design method for electromagnetic scatterers based on nurbs surfaces and filter-enhanced gwo, IEEE Transactions on Antennas and Propagation.

\bibitem{zou2014iso}
Q.~Zou, J.~Zhang, B.~Deng, J.~Zhao, Iso-level tool path planning for free-form surfaces, Computer-Aided Design 53 (2014) 117--125.

\bibitem{2016_xiao_tspline_data_exchange}
W.~Xiao, Y.~Liu, R.~Li, W.~Wang, J.~Zheng, G.~Zhao, Reconsideration of {T}-spline data models and their exchanges using {STEP}, Computer-Aided Design 79 (2016) 36--47.

\bibitem{zou2020decision}
Q.~Zou, H.-Y. Feng, A decision-support method for information inconsistency resolution in direct modeling of cad models, Advanced Engineering Informatics 44 (2020) 101087.

\bibitem{zou2019variational}
Q.~Zou, H.-Y. Feng, Variational b-rep model analysis for direct modeling using geometric perturbation, Journal of Computational Design and Engineering 6~(4) (2019) 606--616.

\bibitem{su2020accurate}
C.~Su, X.~Jiang, G.~Huo, Q.~Zou, Z.~Zheng, H.-Y. Feng, Accurate model construction of deformed aero-engine blades for remanufacturing, The International Journal of Advanced Manufacturing Technology 106 (2020) 3239--3251.

\bibitem{li2023xvoxel}
M.~Li, C.~Lin, W.~Chen, Y.~Liu, S.~Gao, Q.~Zou, Xvoxel-based parametric design optimization of feature models, Computer-Aided Design 160 (2023) 103528.

\bibitem{martins2021engineering}
J.~R. Martins, A.~Ning, Engineering design optimization, Cambridge University Press, 2021.

\bibitem{zou2013iso}
Q.~Zou, J.~Zhao, Iso-parametric tool-path planning for point clouds, Computer-Aided Design 45~(11) (2013) 1459--1468.

\bibitem{zou2021length}
Q.~Zou, Length-optimal tool path planning for freeform surfaces with preferred feed directions based on poisson formulation, Computer-Aided Design 139 (2021) 103072.

\bibitem{zou2021robust}
Q.~Zou, Robust and efficient tool path generation for machining low-quality triangular mesh surfaces, International Journal of Production Research 59~(24) (2021) 7457--7467.

\bibitem{luo2023simple}
G.~Luo, Q.~Zou, A simple point-based iso-scallop tool path planning method for noisy point clouds with high robustness and controlled errors, Computer-Aided Design 163 (2023) 103560.

\bibitem{wang2023computing}
Z.~Wang, S.~Liu, L.~Liu, Q.~Zou, X.-M. Fu, Computing smooth preferred feed direction fields with high material removal rates for efficient cnc tool paths, Computer-Aided Design 164 (2023) 103591.

\bibitem{2022_wu_NURBS_advantages}
B.~Wu, J.~Ma, L.~Wei, X.~Liao, J.~Lu, {NURBS} interpolator with scheduling scheme combining cubic and quartic {S}-shaped feedrate profiles under drive and chord error constraints, Computer-Aided Design 152 (2022) 103380.

\bibitem{2018_lin_survey_pia}
H.~Lin, T.~Maekawa, C.~Deng, Survey on geometric iterative methods and their applications, Computer-Aided Design 95 (2018) 40--51.

\bibitem{2021_hu_tpms_modeling_implicit}
C.~Hu, H.~Lin, Heterogeneous porous scaffold generation using trivariate {B}-spline solids and triply periodic minimal surfaces, Graphical Models 115 (2021) 101105.

\bibitem{2023_liane_tpms_modeling_boundary_curve_based}
L.~Makatura, B.~Wang, Y.-L. Chen, B.~Deng, C.~Wojtan, B.~Bickel, W.~Matusik, Procedural metamaterials: a unified procedural graph for metamaterial design, ACM Transactions on Graphics 42~(5).

\bibitem{2023_jiang_meshless_tpms_based_analysis}
Y.~Jiang, J.~Hu, S.~Wang, N.~Lei, Z.~Luo, L.~Liu, Meshless optimization of triply periodic minimal surface based two-fluid heat exchanger, Computer-Aided Design 162 (2023) 103554.

\bibitem{2019_feng_tpms_mesh_visualization}
J.~Feng, J.~Fu, C.~Shang, Z.~Lin, B.~Li, Sandwich panel design and performance optimization based on triply periodic minimal surfaces, Computer-Aided Design 115 (2019) 307--322.

\bibitem{2021_asbai_tpms_to_mesh}
R.~Asbai-Ghoudan, S.~{Ruiz de Galarreta}, N.~Rodriguez-Florez, Analytical model for the prediction of permeability of triply periodic minimal surfaces, Journal of the Mechanical Behavior of Biomedical Materials 124 (2021) 104804.

\bibitem{2022_kestutis_TPMS_mesh}
K.~Karčiauskas, J.~Peters, Bi-cubic scaffold surfaces, Computer-Aided Design 150 (2022) 103310.

\bibitem{2024_Na_TPMS_meshing}
N.~Qiu, Y.~Wan, Y.~Shen, J.~Fang, Experimental and numerical studies on mechanical properties of {TPMS} structures, International Journal of Mechanical Sciences 261 (2024) 108657.

\bibitem{2019_savio_subdivision_surface_TPMS}
G.~Savio, R.~Meneghello, G.~Concheri, Design of variable thickness triply periodic surfaces for additive manufacturing, Progress in Additive Manufacturing 4~(3) (2019) 281--290.

\bibitem{2021_mesh_subdivision_tpms}
S.~Rosso, A.~Curtarello, F.~Basana, L.~Grigolato, R.~Meneghello, G.~Concheri, G.~Savio, Modeling symmetric minimal surfaces by mesh subdivision, in: Advances on Mechanics, Design Engineering and Manufacturing III, 2021, pp. 249--254.

\bibitem{2011_Pan_minimal-subdivision-surface}
Q.~Pan, G.~Xu, Construction of minimal subdivision surface with a given boundary, Computer-Aided Design 43~(4) (2011) 374--380.

\bibitem{2015_aubry_nurbs_application}
R.~Aubry, S.~Dey, E.~Mestreau, B.~Karamete, D.~Gayman, A robust conforming {NURBS} tessellation for industrial applications based on a mesh generation approach, Computer-Aided Design 63 (2015) 26--38.

\bibitem{2020_noruzi_nurbs_application}
R.~Noruzi, S.~Ghadai, O.~R. Bingol, A.~Krishnamurthy, B.~Ganapathysubramanian, {NURBS}-based microstructure design for organic photovoltaics, Computer-Aided Design 118 (2020) 102771.

\bibitem{1995_NURBS_BOOK}
L.~Piegl, W.~Tiller, The {NURBS} {B}ook, Springer-Verlag, Berlin, Heidelberg, 1995.

\bibitem{2013_fang_chord_length}
J.-J. Fang, C.-L. Hung, An improved parameterization method for {B}-spline curve and surface interpolation, Computer-Aided Design 45~(6) (2013) 1005--1028.

\bibitem{1989_lee_centripetal_parameterization}
E.~Lee, Choosing nodes in parametric curve interpolation, Computer-Aided Design 21~(6) (1989) 363--370.

\bibitem{1999_lim_universal_parameterization}
C.-G. Lim, A universal parametrization in {B}-spline curve and surface interpolation, Computer Aided Geometric Design 16~(5) (1999) 407--422.

\bibitem{1989_foley_knot}
T.~A. Foley, G.~M. Nielson, Knot selection for parametric spline interpolation, in: Mathematical Methods in Computer Aided Geometric Design, 1989, pp. 261--271.

\bibitem{2016_iglesias_parameterization_optimize}
A.~Iglesias, A.~G\'{a}lvez, M.~Collantes, Four adaptive memetic bat algorithm schemes for {B}\'{e}zier curve parameterization, Transactions on Computational Science XXVIII: Special Issue on Cyberworlds and Cybersecurity 9590 (2016) 127–145.

\bibitem{piegl2000surface}
L.~Piegl, Surface approximation to scanned data, The Visual Computer 16 (2000) 386--395.

\bibitem{2019_luo_knot_calulation}
J.~Luo, H.~Kang, Z.~Yang, Knot calculation for spline fitting based on the unimodality property, Computer Aided Geometric Design 73 (2019) 54--69.

\bibitem{2021_michel_heuristic_knot_placement}
D.~Michel, A.~Zidna, A new deterministic heuristic knots placement for {B}-{S}pline approximation, Mathematics and Computers in Simulation 186 (2021) 91--102.

\bibitem{1992_Farin_Curvature_based_weight_assignment}
G.~Farin, From conics to {NURBS}: A tutorial and survey, IEEE Computer Graphics and Applications 12~(5) (1992) 78--86.

\bibitem{2008_Nils_gauss_newton_based_iterative_solving_linear_systems}
N.~Carlson, M.~Gulliksson, Surface fitting with nurbs: a gauss newton with trust region approach, in: Proceedings of the 13th WSEAS international conference on Applied mathematics, 2008, pp. 169--174.

\bibitem{2005_Lin_pia_for_NURBS}
H.-W. Lin, H.-J. Bao, G.-J. Wang, Totally positive bases and progressive iteration approximation, Computers \& Mathematics with Applications 50~(3) (2005) 575--586.

\bibitem{1995_Gandy_TPMS_explicit_representation_Weierstrass_Diamond}
P.~J. Gandy, D.~Cvijovi{\'c}, A.~L. Mackay, J.~Klinowski, Exact computation of the triply periodic {D} ('diamond') minimal surface, Chemical Physics Letters 314~(5-6) (1999) 543--551.

\bibitem{2000_Gandy_TPMS_explicit_representation_Weierstrass_Gyroid}
P.~J. Gandy, J.~Klinowski, Exact computation of the triply periodic {G} ('{G}yroid') minimal surface, Chemical Physics Letters 321 (2000) 363--371.

\bibitem{2000_Gandy_TPMS_explicit_representation_Weierstrass_SchwarzP}
P.~Gandy, J.~Klinowski, Exact computation of the triply periodic {S}chwarz {P} minimal surface, Chemical Physics Letters 322 (2000) 579--586.

\bibitem{2004_mit_course_chapter_18}
E.~Carberry, Chapter 18: {W}eierstrass-{E}nneper representations, \url{https://ocw.mit.edu/courses/18-994-seminar-in-geometry-fall-2004/ 57baa4cfbbefdf7a1ac1776205c3ad1b_chapter18.pdf}, access: 2024-04-10.

\bibitem{1987_Filip_error_estimation}
D.~Filip, R.~Magedson, R.~Markot, Surface algorithms using bounds on derivatives, Computer Aided Geometric Design 3~(4) (1987) 295–311.

\bibitem{2000_Zheng_error_estimation_nurbs}
J.~Zheng, T.~W. Sederberg, Estimating tessellation parameter intervals for rational curves and surfaces, ACM Transactions on Graphics 19~(1) (2000) 56–77.

\bibitem{2017_Hu_G2_continuity_control_points_constraints}
G.~Hu, H.~Cao, X.~Wang, X.~Qin, G2 continuity conditions for generalized {B}ézier-like surfaces with multiple shape parameters, Journal of Inequalities and Applications 2017 (2017) 248.

\bibitem{Fogden1992}
A.~Fogden, S.~Hyde, Parametrization of triply periodic minimal surfaces. ii. {R}egular class solutions, Acta Crystallographica Section A 48 (1992) 575--591.

\bibitem{2012_kineri_surface_fitting_iterative}
Y.~Kineri, M.~Wang, H.~Lin, T.~Maekawa, B-spline surface fitting by iterative geometric interpolation/approximation algorithms, Computer-Aided Design 44~(7) (2012) 697--708.

\bibitem{2024_bo_iterative_approximation_error_control_parameter_correction}
P.~Bo, X.~Mai, W.~Meng, C.~Zhang, Improving geometric iterative approximation methods using local approximations, Computers \& Graphics 116~(C) (2024) 33–45.

\end{thebibliography}




\end{document}